\newcommand\td{\text{d}}
\newcommand{\p}{\partial}
\newcommand{\be}{\begin{equation}}
\newcommand{\ee}{\end{equation}}
\newcommand{\bea}{\begin{eqnarray}}
\newcommand{\eea}{\end{eqnarray}}
\newtheorem{lemma}{Lemma} [section]
\newtheorem{corollary}{Corollary} [section]
\newcommand*\xbar[1]{%
  \hbox{%
    \vbox{%
      \hrule height 0.5pt 
      \kern0.3ex
      \hbox{%
        \kern-0.0em
        \ensuremath{#1}%
        \kern-0.0em
      }%
    }%
  }%
}
\def\>{\rangle} 
\def\<{\langle} 
\DeclareFontFamily{OT1}{rsfs}{} \DeclareFontShape{OT1}{rsfs}{m}{n}{
<-7> rsfs5 <7-10> rsfs7 <10-> rsfs10}{}
\DeclareMathAlphabet{\mycal}{OT1}{rsfs}{m}{n}
\begin{document}
\numberwithin{equation}{section}
\title{Irrelevant and marginal deformed BMSFT}

\author{xxx}

\date{\today}

\def\mytitle{Irrelevant and marginal deformed BMS field theories}

\addtolength{\headsep}{4pt}

\begin{centering}

  \vspace{1cm}

  \textbf{\Large{\mytitle}}

  \vspace{1.5cm}

  {\large Song He$^{1,2}$, Xin-Cheng Mao$^{1}$ }

\vspace{0.5cm}

\begin{minipage}{.9\textwidth}\small \it  \begin{center}
    $1$ Center for Theoretical Physics and College of Physics,\\ Jilin University, 2699 Qianjin Street,
   Changchun 130012, China
 \end{center}
\end{minipage}

\vspace{0.3cm}

\begin{minipage}{.9\textwidth}\small \it  \begin{center}
     $2$ Max Planck Institute for Gravitational Physics (Albert Einstein Institute),\\
     Am M\"{u}hlenberg 1, 14476 Golm, Germany
 \end{center}
\end{minipage}

\vspace{0.3cm} \vspace{0.3cm}

\begin{minipage}{.9\textwidth}\small \it  \begin{center}
    Email: hesong@jlu.edu.cn, maoxc1120@mails.jlu.edu.cn
 \end{center}
\end{minipage}

\end{centering}

\vspace{1cm}

\begin{center}
\begin{minipage}{.9\textwidth}
  \textsc{Abstract}. 
  In this study, we investigate various deformations within the framework of Bondi-van der Burg-Metzner-Sachs invariant field theory (BMSFT). Specifically, we explore the impact of Bondi-van der Burg-Metzner-Sachs (BMS) symmetry on the theory by introducing key deformations, namely, $T \overline{T}$, $JT_{\mu}$, and $\sqrt{T \overline{T}}$ deformations. In the context of generic seed theories possessing BMS symmetry, we derive the first-order correction of correlation functions using the systematic application of BMS symmetry ward identities. However, it is worth noting that higher-order corrections are intricately dependent on the specific characteristics of the seed theories. To illustrate our findings, we select the BMS free scalar and free fermion as representative seed theories. We then proceed to analytically determine the deformed action by solving the nontrivial flow equations. Additionally, we extend our analysis to include second-order deformations within these deformed theories.

  \end{minipage}
\end{center}
\thispagestyle{empty}

\newpage

\tableofcontents

\newpage

\section{Introduction}

The holographic principle, foundational in quantum gravity \cite{tHooft:1993dmi, Susskind:1994vu}, reshapes our understanding of spacetime, black holes, and the universe, particularly through the anti-de Sitter/conformal field theory (AdS/CFT) correspondence \cite{Maldacena:1997re, Gubser:1998bc, Witten:1998qj}. This correspondence unravels the intricate relationship between quantum gravity properties and boundary field theory. Renormalization group (RG) flow in the boundary field theory provides insights into quantum gravity's behavior across energy scales. We focus on investigating marginal and irrelevant deformations using the double-current construction.

For irrelevant deformations, computational challenges arise due to an infinite number of operators at a fixed point, necessitating an infinite number of counterterms. Solvable classes of 2D spacetime irrelevant deformations, such as $T \overline{T}$ \cite{Zamolodchikov:2004ce, Smirnov:2016lqw, Cavaglia:2016oda} and $J \overline{T}$ \cite{Guica:2017lia} for $U(1)$ current-conserving seed theories, present exceptions. The $JT_{\mu}$ deformation \cite{Anous:2019osb} generalizes $J \overline{T}$. RG flow analysis shows these deformations lead to high-energy scales, disrupting seed theory symmetries. Symmetries of $T \overline{T}$, $J \overline{T}$, and $JT_{\mu}$ deformed CFTs are explored in \cite{Guica:2020uhm}. In addition, $T \overline{T}$ and $JT_{\mu}$ deformations serve as probes for seed theory's ultraviolet (UV) behavior. Marginal deformations, like root-$T \overline{T}$ \cite{Babaei-Aghbolagh:2022uij} (see also \cite{Babaei-Aghbolagh:2022leo,Ferko:2022cix, Rodriguez:2021tcz}), arise from a finite number of operators. Exact marginal deformations maintain symmetries, though perturbative methods introduce deviations, discussed in sections \ref{sec:scalarTTbar} and \ref{sec:deformFermion}. Throughout RG flows, these deformations impact quantum gravity in the bulk through the AdS/CFT correspondence \cite{Witten:2001ua, Klebanov:1999tb, Berkooz:2002ug, Mueck:2002gm, Diaz:2007an, Hartman:2006dy, Gubser:2002zh, Nguyen:2021pdz, Ebert:2023tih}.

To further understand quantum gravity, efforts extend the AdS/CFT correspondence to flat holography \cite{Susskind:1998vk, Polchinski:1999ry,deBoer:2003vf, Arcioni:2003xx, Arcioni:2003td, Solodukhin:2004gs, Barnich:2006av, Guica:2008mu, Barnich:2009se, Barnich:2010eb, Bagchi:2010zz, Bagchi:2012xr}. Recent studies indicate two approaches for flat holography, both centered on the BMS group in asymptotically flat spacetime (AFS) \cite{Bondi:1962px, Sachs:1962wk, Sachs:1962zza}. 

The first flat holography approach, celestial holography, establishes a connection between 4D asymptotically flat spacetime (AFS) quantum gravity and a 2D conformal field theory (CFT) on the celestial sphere at the null boundary $\mathcal{I}^{\pm}$ \cite{Pasterski:2016qvg, Pasterski:2017kqt, Pasterski:2017ylz, Raclariu:2021zjz, Pasterski:2021rjz, Pasterski:2021raf, string}. This celestial CFT incorporates irrelevant or marginal deformations, extending and enriching the celestial holography framework. Co-dimensional two celestial holography applies 2D irrelevant deformations to 4D quantum gravity, potentially constructing UV-complete theories of general relativity \cite{He:2022zcf}. Double current marginal deformations in celestial CFT correspond to loop corrections in 4D AFS scattering amplitudes, impacting the moduli space of bulk vacua \cite{He:2023lvk}, providing precise holographic dictionaries.

The second approach, co-dimension one Carrollian holography, proposes a duality between quantum gravity in AFS and a Carrollian conformal field theory on the null boundary \cite{Dappiaggi:2004kv, Dappiaggi:2005ci}. This duality {can also be observed} through limits in the AdS/CFT correspondence, where the transition from AdS to Minkowski spacetime, i.e., infinite radius, corresponds to the limit of zero light velocity ($c \rightarrow 0$) in the CFT \cite{Bagchi:2012cy, Bagchi:2022nvj, Banerjee:2022ime}. This limit gives rise to a Carrollian CFT or a BMS-invariant field theory (BMSFT). The $c \rightarrow 0$ limit is also known as the ultra-relativistic (UR) limit. The details of re-deriving BMSFT from UR limit are presented in Appendix \ref{sec:URseed}. Further investigations into the connection between Carrollian and flat holography are discussed in \cite{Saha:2023hsl, Nguyen:2023vfz, Salzer:2023jqv, Chen:2023pqf, Chen:2023naw, Bhattacharyya:2023sjr, Bagchi:2023dzx, Bagchi:2023fbj, Saha:2022gjw, Banerjee:2023jpi}.

Given the interconnected nature of various flat holography approaches \cite{Donnay:2022aba, Donnay:2022wvx, Bagchi:2022emh}, our study delves into the impact of irrelevant or marginal deformations, adopting a Carrollian perspective. Correlation functions, pivotal observables in quantum field theories, are the focal points we focused on. Aiming to maintain simplicity and clarity, our examination focuses on the 2D BMSFT as the seed theory, known for its foundational role in flat holography. Specific deformations, such as $T \overline{T}$, $JT_{\mu}$, and $\sqrt{T \overline{T}}$, well-defined in Lorentz-invariant quantum field theories, form the basis of our exploration.

Notably, extending these deformations to the relatively unexplored realm of BMSFT becomes a central objective. Drawing inspiration directly from previous studies \cite{Cardy:2018jho, Cardy:2020olv, Hansen:2020hrs, Chen:2020jdi, Esper:2021hfq, Pavshinkin:2021jpy}, we directly define deformations on BMSFT using the same framework as deformations on 2D Lorentz-invariant quantum field theories. The deformations can also be defined by leveraging the UR limit on Lorentz-invariant quantum field theory definitions. Our proposed approaches to defining deformations within the Carrollian structure are elucidated, with potential equivalence between them suggested by the findings presented in Appendix \ref{sec:DeformUR}. These results offer valuable insights into the compatibility and interchangeability of the two proposed methods, shedding light on the nuanced relationship between deformations and the Carrollian structure in the context of flat holography.

The paper is structured as follows. In Section \ref{sec:Review}, we offer a comprehensive review of the seed theory BMSFT, encompassing an overview of the operator product expansions (OPEs) between the conserved currents and the primary operators. Moreover, we present the non-vanishing correlation functions constructed by these primary operators. Moving on to Section \ref{sec:1stTTbarJTaroot}, we introduce and define the $T \overline{T}, JT_{\mu}, \sqrt{T \overline{T}}$ deformations for 2D BMSFT. We then proceed to perturbatively calculate the first-order correction of these deformations on the correlators in a generic form. To observe the flow effect, achieving accuracy up to at least the second order becomes necessary. Therefore, we apply these deformations to specific examples: the BMS-invariant free scalar model in Section \ref{sec:scalarTTbar} and the BMS-invariant free Fermion model in Section \ref{sec:deformFermion}.  In these two sections, we provide the all-order corrected Lagrangian for the $T \overline{T}, JT_{\mu}$ and $\sqrt{T \overline{T}}$ deformed theory. We then proceed to compute the higher-order corrections of the deformed correlation functions systematically.

\section{2D BMSFT} \label{sec:Review}

\subsection{BMS$_3$ algebra}

The 2D BMSFT is a kind of quantum field theory which is invariant under the following local BMS transform
\begin{equation} \label{eq:BMStran}
    x \rightarrow f(x), \quad y \rightarrow f'(x) y + g(x),
\end{equation}
where $x$ denotes space while $y$ denotes time, and $f(x),g(x)$ are the local dilation and local boost respectively, which can be expanded near $x = 0$ as
\begin{equation}
    f(x) = \sum_{n \in \mathbb{Z}} a_n x^{n+1}, \quad g(x) = \sum_{n \in \mathbb{Z}} b_n x^{n+1}, \quad a_n, b_n \in \mathbb{R}.
\end{equation}
The transform \eqref{eq:BMStran} can be generated by the following BMS generators \cite{Hao:2021urq}
\begin{equation} \label{eq:BMSWitt}
    L_n = - x^n ( x \p_{x} + (n+1) y \p_y), \quad M_n = - x^{n+1} \p_y.
\end{equation}
By central extension, the algebra of BMSFT should be 
\begin{equation} \label{eq:BMFTA}
    \begin{split}
        [L_n, L_m] & = (n-m) L_{n+m} + \frac{c_L}{12} (n^3 - n) \delta_{n+m,0}, \\
        [L_n, M_m] & = (n-m) M_{n+m} + \frac{c_M}{12} (n^3 - n) \delta_{n+m,0}, \\
        [M_n, M_m] & = 0,
    \end{split}
\end{equation}
which refers to $BMS_3$ algebra, equivalently, the 2D Galilean or Carrollian conformal algebra (in short, GCA or CCA). This algebra can be derived from UR limit, see appendix \ref{sec:URAlgebera}. Suppose the action is invariant under transformation \eqref{eq:BMStran}, the components of stress tensor should satisfy \cite{Saha:2022gjw}
\begin{equation}
    T^x_{\ y} = 0, \quad T^{\mu}_{\ \mu} = 0.
\end{equation}
Therefore the components of stress tensor are defined as 
\begin{equation} \label{eq:Tcompo}
    T^{\mu}_{\ \nu} = 
    \begin{pmatrix}
        M & T \\
        0 & -M
    \end{pmatrix},
    \quad T = L + y \p_x M,
\end{equation}
where $T,M$ are Noether current of the translation symmetry along $x,y$ respectively \cite{Hao:2021urq}. Then the conservation law $\p_{\mu} T^{\mu}_{\ \ \nu} = 0$ yields
\begin{equation} \label{eq:conservT}
    \p_y L = 0, \quad \p_y M = 0.
\end{equation}
Therefore the components can be expanded as
\begin{equation} \label{eq:modexpanLM}
    L = \sum_{n \in \mathbb{Z}} L_n x^{-n-2}, \quad M = \sum_{n \in \mathbb{Z}} M_n x^{-n-2}.
\end{equation}
Then by using the algebra \eqref{eq:BMFTA}, the OPEs between the components of stress tensors can be obtained as \footnote{\label{fn:UR} Here is some ambiguity we should remark. While implementing integration, $x$ should be recovered as $z$ by the UR limit. Namely, $x \rightarrow \widetilde{x} = x + i \varepsilon y$, which belongs to the complex plane. Fortunately, this is equivalent to the radial quantization and analytical continuation from cylinder to plane \cite{Hao:2021urq}. That is why we can use \eqref{eq:BMFTA} to derive OPE.}
\begin{equation} \label{eq:T,MOPE}
    \begin{split}
        L(x') L(x) \sim & \frac{c_L}{2(\widetilde{x}'-\widetilde{x})^4} + \frac{2 L(x)}{(\widetilde{x}'-\widetilde{x})^2} + \frac{\p_x L(x)}{\widetilde{x}'-\widetilde{x}}, \quad M(x') M(x) \sim 0,\\
        L(x') M(x) \sim & \frac{c_M}{2(\widetilde{x}'-\widetilde{x})^4} + \frac{2 M(x)}{(\widetilde{x}'-\widetilde{x})^2} + \frac{\p_x M(x)}{\widetilde{x}'-\widetilde{x}}.
    \end{split}
\end{equation}
The $TM$-OPE can be simply obtained by using their relation $T = L + y \p_x M$
\begin{equation} \label{eq:TMOPE}
    \begin{split}
        T(x',y') T(x,y) \sim & \frac{c_L}{2(x'-x)^4} + \frac{2 T(x,y)}{(\widetilde{x}'-\widetilde{x})^2} + \frac{\p_x T(x,y)}{\widetilde{x}'-\widetilde{x}} \\
        & - \frac{2 c_M (y'-y)}{(\widetilde{x}'-\widetilde{x})^5} - \frac{4(y'-y) M(x)}{(\widetilde{x}'-\widetilde{x})^3} - \frac{(y'-y) \p_x M(x)}{(\widetilde{x}'-\widetilde{x})^2}, \\
        T(x',y') M(x) \sim & M(x') T(x,y) \sim \frac{c_M}{2(\widetilde{x}'-\widetilde{x})^4} + \frac{2 M(x)}{(\widetilde{x}'-\widetilde{x})^2} + \frac{\p_x M(x)}{\widetilde{x}'-\widetilde{x}}, \\
        M(x') M(x) \sim & 0.
    \end{split}
\end{equation}


\subsection{Highest weight representation}

In this subsection, we discuss the highest weight representation of 2D BMSFT. Since the BMS algebra \eqref{eq:BMFTA} can be obtained by UR limit from Virasoro algebra, it is straightforward to borrow the highest weight representation from 2D CFT by using UR limit. This representation in BMSFT is referred to as the induced representation \cite{Bagchi:2009ca}, which is unitary. Note, however, that the induced representation is not the highest weight representation of 2D BMSFT, which can be derived parallelly as that of 2D CFT. In this way, the Hilbert space of 2D BMSFT can be decomposed into the BMS module of \eqref{eq:BMFTA} as
\begin{equation}
    \mathcal{H} = \sum_{\Delta, \xi} \mathcal{H}_{\Delta, \xi},
\end{equation}
where $\Delta, \xi$ are eigenvalues of $L_0, M_0$ respectively. The requirement to the primary operator defined on origin $\textbf{O} = \textbf{O} (0,0)$ in each block can be derived by using the state-operator correspondence
\begin{equation}
    [L_n, \textbf{O}] = 0, \quad [M_n, \textbf{O}] = 0, \quad n>0.
\end{equation}
Unfortunately, it turns out that the Kac determinant for the highest weight representation of 2D BMSFT with non-zero boost charge is negative \cite{Bagchi:2009pe}, which indicates that the highest weight representation of 2D BMSFT is not unitary. Therefore, even though $L_0, M_0$ are commutative with each other, they may not be diagonalizable in the same module simultaneously. This will form a novel “multiplet” structure of primary operator which shares a similar feature as logarithmic CFT \cite{Gaberdiel:1996kx, Rohsiepe:1996qj, Gaberdiel:2001tr, Cardy:2013rqg, Creutzig:2013hma, Kytola:2009ax, Hogervorst:2016itc, Gurarie:1993xq}. Then the eigenvalue of $L_0,M_0$ can be read off as
\begin{equation}
    [L_0, O_a] = \Delta O_a, \quad [M_0, O_a] = (\boldsymbol{\xi} O)_a, \quad a = 0, \cdots, r-1
\end{equation}
where the rank $r$ \footnote{In particular, the rank-1 multiplet of primary operators refer to as singlets, which is denoted as $\mathcal{O}$. They form the singlet version of highest weight representation \cite{Bagchi:2009ca}.} is the number of primary operators in the same module that are related to each other, and the matrix $\boldsymbol{\xi}$ can always be chosen as Jordan form
\begin{equation}
    \boldsymbol{\xi} = 
    \begin{pmatrix}
        \xi \\
        1 & \xi \\
        & \ddots & \ddots \\
        & & 1 & \xi
    \end{pmatrix}.
\end{equation}
The operators on arbitrary position can be evolved by $U = e^{x L_{-1} - y M_{-1}}$ as
\begin{equation}
    O_a(x,y) = U O_a(0,0) U^{-1}
\end{equation}
Then the transform of primary operators will be derived by using the Baker-Campbell-Hausdorff (BCH) formula
\begin{align}
    [L_n, O_a (x,y)] = & \left[ x^{n+1} \p_x + (n+1) x^n y \p_y + (n+1) (x^n \Delta + n x^{n-1} y \boldsymbol{\xi}) \right] O_a (x,y), \notag \\
    [M_n, O_a (x,y)] = & \left[ x^{n+1} \p_y + (n+1) x^n \boldsymbol{\xi} \right] O_a (x,y).
\end{align}
Similarly, the OPEs between primary operators and stress tensors can be derived as
\begin{equation} \label{eq:WardM}
    \begin{split}
        T(x',y') O_{a}(x,y) \sim & \frac{\Delta O_{a}}{(\widetilde{x}'-\widetilde{x})^2} + \frac{2(y - y') (\boldsymbol{\xi} \cdot O)_a}{(\widetilde{x}'-\widetilde{x})^3} + \frac{\p_x O_{a}}{\widetilde{x}'-\widetilde{x}} + \frac{(y-y') \p_y O_{a}}{(\widetilde{x}'-\widetilde{x})^2} \\
        M(x') O_{a}(x,y) \sim & \frac{(\boldsymbol{\xi} \cdot O)_a}{(\widetilde{x}'-\widetilde{x})^2} + \frac{\p_y O_{a}}{\widetilde{x}'-\widetilde{x}},
    \end{split}
\end{equation}
where we substituted $\widetilde{x}$ for $x$, see the relevant discussion in footnote \ref{fn:UR}. This matches with the OPEs derived from \cite{Saha:2022gjw}. The OPEs can also be derived from the UR limit, see appendix \ref{sec:UROPE}.

\subsection{2D non-Lorentzian Kac-Moody algebra} \label{sec:NLKM}

The non-vanishing commutators of 2D non-Lorentzian Kac-Moody (NLKM) algebra are \cite{Bagchi:2023dzx}
\begin{equation} \label{eq:NLKM}
    \begin{split}
        [L_n, L_m] & = (m-n) L_{m+n} + \frac{c_L}{12} (m^3 - m) \delta_{m+n, 0}, \\
        [L_m, M_n] & = (m-n) M_{m+n} + \frac{c_M}{12} (m^3 - m) \delta_{m+n, 0}, \\
        [L_m, J_n^a] & = - n J_{m+n}^a, \quad [L_m, K_n^a] = - n K_{m+n}^a, \quad [M_m, J_n^a] = - n K_{m+n}^a, \\
        [J_m^a, J_n^b] & = i F^{abc} J^c_{m+n} + i G^{abc} K_{m+n}^c + m k_1 \delta^{ab} \delta_{m+n, 0}, \\
        [J_m^a, K_n^b] & = i F^{abc} K_{m+n}^c + m k_2 \delta^{ab} \delta_{m+n, 0},
    \end{split}
\end{equation}
where we sum over the double index $c$, and the first two lines are exactly the CCA \eqref{eq:BMFTA} derived in previous subsections, indicating that the subalgebra of NLKM is the $BMS_3$ algebra. The NLKM algebra can also be derived from Virasoro Kac-Moody algebra by taking UR limit, see appendix \ref{sec:URAlgebera}. Furthermore, the NLKM algebra can be intrinsically derived from the conserved Kac-Moody current with the following form \footnote{ Note that the upper index $\mu$ of the current $j^{a \mu}$ is raised by $\epsilon^{\mu \nu}$. Precisely, we have $j^{a \mu} = \epsilon^{\mu \nu} J^a_{\nu}$. This depends on the structure of Newton-Cartan geometry \cite{Hofman:2014loa}, see also the review in \cite{Hao:2022xhq} }
\begin{equation} \label{eq:KMcurrent}
    j^{a \mu} = (J^a_x, - J^a_y),
\end{equation}
where
\begin{equation}
    J_y^a = \sum_{n \in \mathbb Z} x^{-n - 1} K_n^a, \quad J_x^a = \sum_{n \in \mathbb Z} x^{-n -1} \left[ J_n^a - (n+1) \frac{y}{x} K_n^a \right].
\end{equation}
The OPEs between the current and the primary operators are \cite{Bagchi:2023dzx}
\begin{equation} \label{eq:JWard}
    \begin{split}
        J^a_y (x', y') O(x,y) & \sim i \frac{\mathcal{F}^a \cdot O(x,y)}{\widetilde{x}' - \widetilde{x}}, \\
        J^a_x (x', y') O(x,y) & \sim i \frac{\mathcal{G}^a \cdot O(x,y)}{\widetilde{x}' - \widetilde{x}} - i (y' - y) \frac{\mathcal{F}^a \cdot O(x,y)}{(\widetilde{x}' - \widetilde{x})^2},
    \end{split}
\end{equation}
where $\mathcal{G}, \mathcal{F}$ are two independent operators acting on fields, which denote the variation of the fields under the infinitesimal NLKM transformation.

\subsection{Correlators}

\subsubsection{Correlators of singlets}

Since the vacuum is invariant under the global BMS symmetry, the two and three-point function of primary operators can be fixed as
\begin{equation} \label{eq:G2Single}
    \<X_2\> = \<\mathcal{O}_1(x_1,y_1) \mathcal{O}_2(x_2,y_2)\> = N \delta_{\Delta_1, \Delta_2} \delta_{\xi_1, \xi_2} |x_{12}|^{-2 \Delta_1} e^{-2 \xi_1 \frac{y_{12}}{x_{12}}},
\end{equation}
\begin{multline} \label{eq:G3Single}
    \<X_3\> = \<\mathcal{O}_1(x_1,y_1) \mathcal{O}_2(x_2,y_2) \mathcal{O}_3(x_3,y_3) \> \\
    = c_{123} |x_{12}|^{-\Delta_{123}} |x_{23}|^{-\Delta_{231}} |x_{31}|^{-\Delta_{312}} \text{exp} \left( - \xi_{123} \frac{y_{12}}{x_{12}} - \xi_{312} \frac{y_{31}}{x_{31}} - \xi_{231} \frac{y_{23}}{x_{23}} \right),
\end{multline}
where $N$ is the normalization factor, $c_{123}$ encodes dynamical information of BMSFTs, and
\begin{equation}
    x_{ij} = x_i - x_j, \ y_{ij} = y_i - y_j \quad \Delta_{ijk} = \Delta_{i} + \Delta_{j} - \Delta_{k}, \ \xi_{ijk} = \xi_{i} + \xi_{j} - \xi_{k}
\end{equation}
Moreover, the four-point function of primary operators can be defined up to an arbitrary function
\begin{equation} \label{eq:G4Single}
    \begin{split}
        \<X_4\> = & \< \mathcal{O}_1(x_1,y_1) \mathcal{O}_2(x_2,y_2) \mathcal{O}_3(x_3,y_3) \mathcal{O}_4(x_4,y_4) \>\\
        = & \prod_{i<j}^4 |x_{ij}|^{\sum_{k=1}^4 -\Delta_{ijk}/3 } \text{exp} \left( \frac{y_{ij}}{x_{ij}} \sum_{k=1}^4 \frac{\xi_{ijk}}{3} \right) f(\mathcal{X}, \mathcal{Y}),
    \end{split}
\end{equation}
where the following defined cross ratios are invariant under the global BMS transform
\begin{equation}
    \mathcal{X} = \frac{x_{12} x_{34}}{x_{13} x_{24}}, \quad \mathcal{Y} = \frac{y_{12}}{x_{12}} + \frac{y_{34}}{x_{34}} - \frac{y_{13}}{x_{13}} - \frac{y_{24}}{x_{24}}.
\end{equation}

\subsubsection{Correlators of multiplets}

Things will become more complex than singlet primary operators. For two and three functions, we have
\begin{equation} \label{eq:G2Multi}
    \<O_{ia}(x_1, y_1) O_{jb}(x_2, y_2)\> = 
    \begin{cases}
        & 0, \quad q_i = a + b +1 -r_i < 0,\\
        & \delta_{ij} N_i |x_{12}|^{-2 \Delta_i} e^{-2\xi_i \frac{y_{12}}{x_{12}}} \frac{1}{q_i!} \left(-\frac{2 y_{12}}{x_{12}}\right)^{q_i}, \quad q_i \geq 0,
    \end{cases}
\end{equation}
\begin{equation}
    \< O_{ia} O_{jb} O_{kc} \> = A \ B \ C_{ijk}^{abc},
\end{equation}
where $N_i$ is the normalization factor and
\begin{equation} \label{eq:ABC3pt}
    \begin{split}
        A = & \text{exp} \left( - \xi_{ijk} \frac{y_{12}}{x_{12}} - \xi_{kij} \frac{y_{31}}{x_{31}} - \xi_{jki} \frac{y_{23}}{x_{23}} \right), \\
        B = & |x_{12}|^{- \Delta_{123}} |x_{23}|^{- \Delta_{231}} |x_{31}|^{- \Delta_{312}}, \\
        C_{ijk}^{abc} = & \sum_{n_1 = 0}^{a} \sum_{n_2 = 0}^{b} \sum_{n_3 = 0}^{c} c_{ijk}^{n_1n_2n_3} \frac{ p_1^{n_1} p_2^{n_2} p_3^{n_3} }{n_1! n_2! n_3!}, \quad p_i = \p_{\xi_i} \log A.
    \end{split}
\end{equation}

\section{Deformations for the 2D BMSFT} \label{sec:1stTTbarJTaroot}

We will now discuss the effect of irrelevant and marginal deformations on the BMSFT. As we have introduced before, little is known about how to define deformations acting on BMSFT. Therefore, in this section, we will generically implement the definitions of irrelevant and marginal deformations to the seed theory 2D BMSFT. Specifically, we will discuss $T \overline{T}$ and $JT_{\mu}$ for irrelevant deformations and $\sqrt{T \overline{T}}$ for marginal deformation. As we will see, the first-order correction to both correlators and the Lagrangians or actions are all universal and are not affected by the flow of deformations. Our primary concern is where the seed theory will be flowed by these deformations. The flow effect will be reflected in the higher-order corrections, which, unfortunately, are not universal and depend on the concrete seed theory. As a generic introduction to show the universal properties of the deformed correlation functions without knowing the fields, we will mainly focus on the first-order correction of these deformations in this section, while the non-universal, or higher-order corrections will be concerned in the next sections.

\subsection{$T \overline{T}$ deformed BMSFT}

The $T \overline{T}$ deformed action for BMSFT, which is a non-relativistic field theory, can be defined in a similar way as that for CFT, namely
\begin{equation} \label{eq:detT}
    \p_{\lambda} S^{[\lambda]} = \lambda \int \td x \td y O_{T \Bar{T}}^{[\lambda]}, \quad O_{T \Bar{T}} = \text{det} T^{\mu}_{\ \nu}, \quad S^{[\lambda]} [\Phi, \p_{\mu} \Phi] = \int dx dy \mathcal{L}^{[\lambda]}.
\end{equation}
Perturbatively, each quantity can be expanded as a Taylor series by the power of $\lambda$
\begin{equation} \label{eq:Taylor}
    \mathcal{L}^{[\lambda]} = \sum_{n = 1}^{\infty} \frac{\lambda^n}{n!} \mathcal{L}^{(n)}, \quad T^{\mu [\lambda]}_{\ \nu} = \sum_{n = 1}^{\infty} \frac{\lambda^n}{n!} T^{\mu (n)}_{\ \nu}, \quad S^{[\lambda]} = \sum_{n = 1}^{\infty} \frac{\lambda^n}{n!} S^{(n)},
\end{equation}
where $S^{(n)} = \int \td x \td y \mathcal{L}^{(n)}$. Then each order of deformed Lagrangian satisfies the following recursion relation \cite{He:2020cxp}\footnote{One can also refer to the recent systematic investigation \cite{He:2023kgq} of the deformed correlation function which is independent of detailed data of the seed theory.}
\begin{equation} \label{eq:recursion}
    \begin{split}
        \mathcal{L}^{(n+1)} = & \frac{1}{2} \sum^n_{i = 0} C^i_n \left( T^{\mu(i)}_{\ \mu} T^{\nu (n-i)}_{\ \nu} - T^{\mu (i)}_{\ \nu} T^{\nu (n-i)}_{\ \mu} \right), \\
        T^{\mu (n)}_{\ \nu} = & \frac{\p \mathcal{L}^{(n)}}{\p (\p_{\mu} \Phi)} \p_{\nu} \Phi - \delta^{\mu}_{\ \nu} \mathcal{L}^{(n)}, \quad n = 1, 2, \cdots.
    \end{split}
\end{equation}
The $T^{\mu (0)}_{\ \nu}$ is defined in \eqref{eq:Tcompo}. Sometimes $\mathcal{T}^{\mu (0)}_{\ \nu} = \frac{\p \mathcal{L}^{(0)}}{\p (\p_{\mu} \Phi)} \p_{\nu} \Phi - \delta^{\mu}_{\ \nu} \mathcal{L}^{(0)}$ without the EoM of fields is in the same form as $T^{\mu (0)}_{\ \nu}$ defined in \eqref{eq:Tcompo}. In this case, the expression of $T^{\mu (n)}_{\ \nu}$ above can be formally extended to $n = 0, 1, 2, \cdots$. We will show this in section \ref{sec:scalarTTbar}. Generally, without the EoM of fields, $\mathcal{T}^{\mu (0)}_{\ \nu}$ is not in the same form as $T^{\mu (0)}_{\ \nu}$, then the $T^{\mu (n)}_{\ \nu}$ in the above equation should not include $n=0$, and the EoM of fields should be implemented after finishing the computation of all order corrections of the deformed Lagrangian. We will see this in section \ref{sec:deformFermion}. The deformed correlation function can be derived from the path integral definition as
\begin{equation} \label{eq:path<Xn>}
    \<X_n\>_{[\lambda]}^{T \overline{T}} = \frac{ \int \mathcal{D} \Phi \ X_n \ e^{- S^{[\lambda]}} }{ \int \mathcal{D} \Phi e^{- S^{[\lambda]}} } = \frac{ \left\< X_n \ e^{- \delta S} \right\>}{ \left\< e^{- \delta S} \right\>}, \quad \delta S = S^{[\lambda]} - S.
\end{equation}
Therefore, the deformed correlators can be computed order by order
\begin{equation}
    \<X_n\>_{[\lambda]}^{T \overline{T}} = \sum^{\infty}_{n = 0} \frac{\lambda^n}{n!} \<X_n\>^{(n)},
\end{equation}
where
\begin{align}
    \<X_n\>^{(0)}_{T \overline{T}} = & \<X_n\>_{[\lambda = 0]} = \<X_n\> \label{eq:0thTTbar} \\
    \<X_n\>^{(1)}_{T \overline{T}} = & \left\< S^{(1)} \right\> \<X_n\> - \left\< S^{(1)} X_n \right\>, \label{eq:1stTTbar} \\
    \<X_n\>^{(2)}_{T \overline{T}} = & \left\< S^{(1)} S^{(1)} X_n \right\> - \left\< S^{(1)} S^{(1)} \right\> \< X_n \> + \left\< S^{(2)} \right\> \<X_n\> -  \left\< S^{(2)} X_n \right\> \notag \\
    & + 2 \left\< S^{(1)} \right\>^2 \<X_n\> - 2 \left\< S^{(1)} \right\> \left\< S^{(1)} X_n \right\> \label{eq:2ndTTbar} \\
    \vdots & \notag
\end{align}
Note that $\<M^k\> = 0$ for any integral $k>0$, since the $MM$ OPE \eqref{eq:T,MOPE} is zero. Therefore the first-order correction of BMSFT correlators \eqref{eq:1stTTbar} can be derived as follows by using un-deformed stress tensor defined in \eqref{eq:Tcompo} and the recursion relation \eqref{eq:recursion} \footnote{Another way to define the $T\overline{T}$ deformation is to use the UR limit to borrow the definition from the $T \overline{T}$ deformation for Lorentz invariant field theory. At least, we can easily verify that the first-order correction of these two definitions is almost the same, up to a constant, which can be absorbed to the coupling constant $\lambda$ by redefinition. One may find the details in appendix \ref{sec:DeformUR}.}
\begin{equation}
    \<X_n\>^{(1)}_{T \overline{T}} = \int \td x \td y \<(MM)(x) X_n\>, 
\end{equation}
where the integral range of $(x,y)$ are all $(- \infty, \infty)$. Since the Ward identities between $M$ and primary operators are generic, the first-order correction of $T \overline{T}$ deformed correlators is universal. Meanwhile, since the first-order correction is based on the data in seed theory, it does not contain the information on the flow effect, which will appear in the higher-order corrections. However, the higher-order correction will not be universal anymore. We will discuss this later. In this subsection, we only consider the first-order correction to the correlators, which could be derived by using the Ward identity \eqref{eq:WardM} as
\begin{equation} \label{eq:1st}
    \<X_n\>^{(1)}_{T \overline{T}} = \sum_{i,j}^n \int \td x \td y \left[ \frac{\boldsymbol{\xi}_i }{(\widetilde{x}-\widetilde{x}_i)^2} + \frac{\p_{y_i}}{\widetilde{x}-\widetilde{x}_i} \right] \left[ \frac{\boldsymbol{\xi}_j }{(\widetilde{x}-\widetilde{x}_j)^2} + \frac{\p_{y_j}}{\widetilde{x}-\widetilde{x}_j} \right] \<X_n\>.
\end{equation}
Therefore we only need to deal with the integral with the following form
\begin{equation} \label{eq:Integral}
    \mathcal{I}_{a_1 \cdots a_n}^f = \int \td x \td y \frac{f(y-y_i)}{\prod^n_{i=1} (\widetilde{x}-\widetilde{x}_i)^{a_i}},
\end{equation}
where $f(y-y_i)$ is an arbitrary function of the time direction $y$ without poles. This integral can be computed by attaching each operator to an arbitrary operator in $(x_k,y_k), \ k \in \{1, \cdots, n\}$
\begin{equation} \label{eq:If}
    \mathcal{I}_{a_1 \cdots a_n}^f = -\sum_{j = 1}^n \int^{y_k}_{y_j} \td y f(y-y_i) \oint_{x_j} \frac{\td x}{\prod^n_{i=1} (x-x_i)^{a_i}},
\end{equation}
where all $x_k$ are all real numbers while the integral variable $x$ is complex number, see details in appendix \ref{sec:Integral}. Specifically, we will encounter the simplest case $f = 1$ for most computations, such that
\begin{equation} \label{eq:ointall=0}
    \mathcal{I}_{a_1 \cdots a_n} = \sum_{j = 1}^n y_{jk} \oint_{x_j} \frac{\td x}{\prod^n_{i=1} (x-x_i)^{a_i}}.
\end{equation}
With all these preparations, the first-order correction can be easily computed as
\begin{equation} \label{eq:TTbar1st}
    \<X_n\>^{(1)}_{T \overline{T}} = - 2 \pi i \sum_{i \neq j} \frac{y_{ij}}{x_{ij}} \left[ \frac{2}{x_{ij}^2} \boldsymbol{\xi}_i \boldsymbol{\xi}_j + \frac{1}{x_{ij}} \left( \boldsymbol{\xi}_i \p_{y_j} + \boldsymbol{\xi}_j \p_{y_i} \right) + \p_{y_j} \p_{y_i}  \right] \<X_n\>.
\end{equation}
We should remark again that $\boldsymbol{\xi}_i$ is a Jordan matrix acting on the $i$-th field of the correlator $\<X_n\>$ in the generic case, whose size depends on the rank of the multiplet primary fields. Moreover, there still are some derivative operators $\p_{y_i}$, which will have distinguishing behavior depending on the different pole structures of the correlators in the seed theory. The deformation will yield extra pole structures at the first-order correction level for the correlation functions, which depend on the rank of the multiplets and the pole structure of the un-deformed correlators. The extra pole structure might be complex, even though the result \eqref{eq:TTbar1st} seems simple. Therefore, it is worth computing some relevant specific cases to detect these novel structures. To manifest them in the first correction level without knowing the fields themselves, we only need to fix the rank of each field in $\<X_n\>$ and we should know the pole structures of the un-deformed $\<X_n\>$. Specifically, in some cases, the pole structure yielded by the deformed and un-deformed pole structure is simply factorized at the first-order corrected correlators, which will appear only when the pole structure can be fixed by the BMS symmetries in the seed theory and the rank of the fields are all 1. Fortunately, the singlet version of 2-point and 3-point functions satisfy these two conditions. But the first-order correction of 4-point functions may not be factorized, since the un-deformed pole structure of 4-point functions is not completely fixed, up to an arbitrary function of cross-ratio. Next in this subsection, we will compute 2-point and 3-point functions for the singlet version to show the factorization property. As a comparison, we will compute the 4-point functions in the singlet version and correlators at 2-point, and 3-point to show that they are not factorized.

\subsubsection{Correction to singlets}

In the singlet case, the rank of the matrix $\boldsymbol{\xi}$ in \eqref{eq:TTbar1st} is 1, namely $\boldsymbol{\xi} = \xi$ is a number, rather than a matrix. 

\paragraph{2-point} From \eqref{eq:G2Single}, we notice that only operators that have the same weights are non-zero. Thus, the non-vanishing singlet 2-point function in seed theory is constructed by two same singlets $\mathcal{O}$ with the conformal dimension $\Delta$ and boost charge $\xi$ defined at different points
\begin{equation}
    \<X_2\> = \<\mathcal{O} (x_1,y_1) \mathcal{O} (x_2,y_2)\> = N |x_{12}|^{-2 \Delta} e^{-2 \xi \frac{y_{12}}{x_{12}}}.
\end{equation}
Then the first-order correction is
Then the first-order correction through $T \overline{T}$ flow is
\begin{equation}
     \< X_2 \>^{(1)}_{T \overline{T}} = - 40 \pi i \xi^2 \frac{y_{12}}{x_{12}^3} \< X_2 \>,
\end{equation}
which is exactly factorized. Note that $x_k$ are real numbers since we only implement analytical continuation for $x$, not $x_k$. The normalization factor $N$ can absorb the probable minus sign caused by the removal of the absolute value sign of $x_{12}$.

\paragraph{3-point} 
We only consider a specific case in 3-pt case: three operators with the same $(\Delta, \xi)$ are placed on 3 different points $(x_1,y_1), (x_2,y_2), (x_3,y_3)$. Namely, we choose
\begin{equation}
    \mathcal{O}_1 = \mathcal{O}_2 =\mathcal{O}_3 = \mathcal{O}, \quad \xi_1 = \xi_2 = \xi_3 = \xi, \quad \Delta_1 = \Delta_2 = \Delta_3 = \Delta.
\end{equation}
So the un-deformed 3-pt \eqref{eq:G3Single} is
\begin{equation}
    \<X_3\> = \< \mathcal{O}(x_1,y_1) \mathcal{O}(x_2,y_2) \mathcal{O}(x_3,y_3) \> = c_{123} (x_{12} x_{23} x_{13})^{-\Delta} e^{ - \xi \left( \frac{y_{12}}{x_{12}} + \frac{y_{31}}{x_{31}} + \frac{y_{23}}{x_{23}} \right) },
\end{equation}
where we absorbed the probable minus sign into $c_{123}$. 
Therefore the first-order correction 
is
\begin{align}
    \< X_3 \>^{(1)}_{T \overline{T}} = 20 \pi i \xi^2 \left( \frac{y_{12}}{x_{12}^3} + \frac{y_{13}}{x_{13}^3} + \frac{y_{23}}{x_{23}^3} \right) \< X_3 \>.
\end{align}
where we used \eqref{eq:ointall=0} and the residue theorem. We can easily see that this result is factorized.

\paragraph{4-point} We first consider 4 same operators in four different points. In this case
\begin{equation}
    \xi_2 = \xi_4 = \xi_1 = \xi_3 = \xi, \quad \Delta_1 = \Delta_2 = \Delta_3 = \Delta_4 = \Delta
\end{equation}
then the un-deformed 4-pt \eqref{eq:G4Single} is
\begin{equation}
    \<\mathcal{O}(x_1,y_1) \mathcal{O}(x_2,y_2) \mathcal{O}(x_3,y_3) \mathcal{O}(x_4,y_4)\> = \frac{F(\mathcal{X}, \mathcal{Y})}{|x_{13} x_{24}|^{2 \Delta}} \text{exp} \left[ - 2 \xi \left( \frac{y_{13}}{x_{13}} + \frac{y_{24}}{x_{24}} \right) \right],
\end{equation}
where we used the following formulas to simplify some of the ratios and absorb extra $\mathcal{X},\mathcal{Y}$ into $F(\mathcal{X},\mathcal{Y})$
\begin{equation*}
    \begin{split}
        x_{12} x_{34} = x_{13} x_{24} \mathcal{X}, \quad & x_{14} x_{23} = x_{13} x_{24} (1 - \mathcal{X}), \\
        \frac{y_{23}}{x_{23}} + \frac{y_{14}}{x_{14}} = \frac{\mathcal{Y}}{\mathcal{X}-1} + \frac{y_{13}}{x_{13}} + \frac{y_{24}}{x_{24}}, & \quad \frac{y_{12}}{x_{12}} + \frac{y_{34}}{x_{34}} = \frac{\mathcal{Y}}{\mathcal{X}} + \frac{y_{13}}{x_{13}} + \frac{y_{24}}{x_{24}}.
    \end{split}
\end{equation*}
Then the 1-st correction of this case is
\begin{align}
    & \<\mathcal{O}(x_1,y_1) \mathcal{O}(x_2,y_2) \mathcal{O}(x_3,y_3) \mathcal{O}(x_4,y_4)\>^{(1)}_{T \overline{T}} \notag \\
    = & - \bigg[ \xi^2 ( x_{24}^4 \mathcal{I}_{0404} + x_{13}^4 \mathcal{I}_{4040} ) + x_{14}^2 x_{23}^2 \left( \mathcal{X}^2 ( \p_{\mathcal{Y}}^2 \text{ln} F + (\p_{\mathcal{Y}} \text{ln} F)^2 ) + \frac{2 \xi^2}{(1-\mathcal{X})^2} \right) \mathcal{I}_{2222} \notag \\
    & + 2 \xi x_{14} x_{23} \mathcal{X} \p_{\mathcal{Y}} \text{ln} F \left( x_{24}^2 \mathcal{I}_{1313} + x_{13}^2 \mathcal{I}_{3131} \right) \bigg] \frac{F(\mathcal{X}, \mathcal{Y})}{|x_{13} x_{24}|^{2 \Delta}} e^{ - 2 \xi \left( \frac{y_{13}}{x_{13}} + \frac{y_{24}}{x_{24}} \right) }, \label{eq:OOOOsingle}
\end{align}
see the integrals in appendix \ref{sec:Integral}.

Then we consider another case: put $\mathcal{O}$ on 2,4 and put $\mathcal{O}^{\dagger}$ on 1,3. In this case, we have
\begin{equation}
    \xi_2 = \xi_4 = - \xi_1 = - \xi_3 = \xi, \quad \Delta_1 = \Delta_2 = \Delta_3 = \Delta_4 = \Delta.
\end{equation}
So the un-deformed 4-point function is \eqref{eq:G4Single}
\begin{equation}
    \<X'_4\> = \<\mathcal{O}^{\dagger}(x_1,y_1) \mathcal{O}(x_2,y_2) \mathcal{O}^{\dagger}(x_3,y_3) \mathcal{O}(x_4,y_4)\> = \frac{F(\mathcal{X}, \mathcal{Y})}{|x_{13} x_{24}|^{2 \Delta}} \text{exp} \left[ \frac{2 \xi}{3} \left( \frac{y_{24}}{x_{24}} - \frac{y_{13}}{x_{13}} \right) \right]
\end{equation}
Then the 1-st correction  of the 4-point function in this case is
\begin{equation} \label{eq:O+OO+Osingle}
    \begin{split}
        & \<\mathcal{O}^{\dagger}(x_1,y_1) \mathcal{O}(x_2,y_2) \mathcal{O}^{\dagger}(x_3,y_3) \mathcal{O}(x_4,y_4)\>^{(1)}_{T \overline{T}} \\
        = & - \bigg\{ x_{14}^2 x_{23}^2  \mathcal{I}_{2222} \mathcal{X}^2 \Big[ \p_{\mathcal{Y}}^2 \text{ln} F + (\p_{\mathcal{Y}} \text{ln} F)^2 \Big] + \xi^2 \bigg[ x_{24}^4 \mathcal{I}_{0404} + x_{13}^4 \mathcal{I}_{4040} - 2 x_{24}^2 x_{13}^2 \mathcal{I}_{2222} \\
        & + \frac{64}{9} (\mathcal{I}_{0202} + \mathcal{I}_{2020} - 2 \mathcal{I}_{1111}) + \frac{16}{3} \Big( x_{24}^2 (\mathcal{I}_{0303} - \mathcal{I}_{1212}) + x_{13}^2 (\mathcal{I}_{3030} - \mathcal{I}_{2121})\Big) \bigg] \\
        & + 2 \xi x_{14} x_{23} \mathcal{X} \p_{\mathcal{Y}} \text{ln} F \left[ x_{24}^2 \mathcal{I}_{1313} - x_{13}^2 \mathcal{I}_{3131} + \frac{8}{3} ( \mathcal{I}_{1212} - \mathcal{I}_{2121} ) \right] \bigg\} \frac{F(\mathcal{X}, \mathcal{Y})}{|x_{13} x_{24}|^{2 \Delta}} e^{ \frac{2 \xi}{3} \left( \frac{y_{24}}{x_{24}} - \frac{y_{13}}{x_{13}} \right) },
    \end{split}
\end{equation}
see the integrals in appendix \ref{sec:Integral}. In general, we can explicitly see that the corrections of 4-point functions in the first-order level is not factorized.

\subsubsection{Correction to multiplets}

Then we will show the first-order correction to the 2-point and 3-point functions composed with multiple operators and show that are not factorized in the first-order level.

\paragraph{2-point} From \eqref{eq:G2Multi} we notice that $O_{ia}$ and $O_{jb}$ must in the same multiplet, namely $i = j$. So we can drop $i,j$, put them in 1,2 respectively ($a + b + 1 - r \geq 0$)
\begin{equation}
    \<O_{a}(x_1, y_1) O_{b}(x_2, y_2)\> = \frac{N |x_{12}|^{ -2 \Delta } e^{-2 \xi \frac{y_{12}}{x_{12}}}}{(a+b+1-r)!} \left(-\frac{2 y_{12}}{x_{12}}\right)^{a+b+1-r}.
\end{equation}
For $r = 1$, the 1-st correction degenerates to the result of the singlet. We only discuss the following condition
\begin{equation}
    a \geq 2, \quad b \geq 2, \quad a + b + 1 - r \geq 0.
\end{equation}
The MM-insertion as
\begin{multline}
    \<MM(x) O_a (x_1,y_1) O_b (x_2,y_2)\> = \bigg\{ ( Q_{a+b}^2 - P_{a+b} ) \left[ \frac{1}{(x-x_1)^4} + \frac{1}{(x-x_2)^4} \right] + \frac{6 (Q_{a+b}^2-P_{a+b})}{(x-x_1)^2 (x-x_2)^2} \\
    - \frac{2 (Q_{a+b}^2 - P_{a+b} - 2 Q_{a+b} \xi)}{(x-x_1) (x-x_2)} \left[ \frac{1}{(x-x_1)^2} + \frac{1}{(x-x_2)^2} \right] \bigg\} \<O_a O_b\>,
\end{multline}
where
\begin{equation}
    \begin{split}
        P_{a+b} = & \frac{x_{12}}{2y_{12}} (Q_{a+b} + \xi) = (a+b+1-r) \left( \frac{x_{12}}{2y_{12}} \right)^2, \\
        Q_{a+b} = & (a+b+1-r) \frac{x_{12}}{2y_{12}} - \xi
    \end{split}
\end{equation}
Then, the 1-st correction of the multiplet 2-point function is
\begin{align}
    \<O_a(x_1,y_1) O_b(x_2,y_2)\>^{(1)}_{T \overline{T}} 
    = & - 16 \pi i \frac{y_{12}}{x_{12}^3} ( 2 Q_{a+b}^2 - 2 P_{a+b} - Q_{a+b} \xi ) \<O_aO_b\>. \label{eq:multi2pt}
\end{align}

\paragraph{3-point} consider $a,b,c \geq 2$ and $i=j=k$ (so we can drop $ijk$ for simplicity). Put $O_a,O_b,O_c$ on 1,2,3 respectively. 
Then, the 1-st correction of the 3-point multiplet is
\begin{equation}
    \begin{split}
        & \< O_{a}(x_1,y_1) O_{b}(x_2,y_2) O_{c}(x_3,y_3) \>^{(1)}_{T \overline{T}} \\
        = & 2 \pi i B \bigg\{ 2 \frac{y_{12}}{x_{12}} \p_{y_1} \p_{y_2} (C^{abc} A) + 2 \frac{y_{13}}{x_{13}} \p_{y_1} \p_{y_3} (C^{abc} A) + 2 \frac{y_{23}}{x_{23}} \p_{y_2} \p_{y_3} (C^{abc} A) \\
        & - 2 \xi \left[ \left( \frac{y_{21}}{x_{21}^2} + \frac{y_{31}}{x_{31}^2} \right) \p_{y_1} (C^{abc} A) + \left( \frac{y_{12}}{x_{12}^2} + \frac{y_{32}}{x_{32}^2} \right) \p_{y_2} (C^{abc} A) + \left( \frac{y_{13}}{x_{13}^2} + \frac{y_{23}}{x_{23}^2} \right) \p_{y_3} (C^{abc} A) \right] \\
        & -2 A \left( \frac{y_{12}}{x_{12}^3} C^{a-1,b-1,c} + \frac{y_{13}}{x_{13}^3} C^{a-1,b,c-1} A + \frac{y_{23}}{x_{23}^3} C^{a,b-1,c-1} \right) \\
        & - 2 A \xi \left[ \left( \frac{y_{12}}{x_{12}^3} + \frac{y_{13}}{x_{13}^3} \right) C^{a-1,b,c} + \left( \frac{y_{12}}{x_{12}^3} + \frac{y_{23}}{x_{23}^3} \right) C^{a,b-1,c} + \left( \frac{y_{13}}{x_{13}^3} + \frac{y_{23}}{x_{23}^3} \right) C^{a,b,c-1} \right] \\
        & - \frac{y_{21}}{x_{21}^2} \p_{y_1} (C^{a,b-1,c} A) - \frac{y_{31}}{x_{31}^2}  \p_{y_1} (C^{a,b,c-1} A)  - \frac{y_{12}}{x_{12}^2} \p_{y_2} (C^{a-1,b,c} A) - \frac{y_{32}}{x_{32}^2} \p_{y_2} (C^{a,b,c-1} A) \\
        & - \frac{y_{13}}{x_{13}^2} \p_{y_3} (C^{a-1,b,c} A) - \frac{y_{23}}{x_{23}^2} \p_{y_3} (C^{a,b-1,c} A) - 4 \xi^2 A C^{abc} \left( \frac{y_{12}}{x_{12}^3} + \frac{y_{13}}{x_{13}^3} + \frac{y_{23}}{x_{23}^3} \right) \bigg\},
    \end{split}
\end{equation}
which may not be factorized.

From now on, we have shown the first-order corrections to the correlation functions, which do not depend on the seed theory. As a perturbative version, the higher-order corrections are based on the data defined in lower orders, which makes the higher-order corrections non-universal and thus depend on the seed theory. The higher-order corrections will be seen in the next sections, which discuss the concrete examples of $T \overline{T}$ deformed free scalar and free Fermion models.

\subsection{$JT_{\mu}$ deformed BMSFT} \label{sec:JTa1st}

The $JT_{\mu}$ deformation can be implemented for the seed BMSFT which contains the NLKM symmetries. The generic definition of $JT_{\mu}$ deformation can be borrowed from \cite{Anous:2019osb} as
\begin{equation}
    \frac{\p S^{[\lambda]}}{\p \lambda^{\mu}_a} = \int \td x \td y \ \epsilon_{\alpha \beta} \ j^{a \alpha}_{[\lambda]} T^{\beta [\lambda]}_{\ \mu}
\end{equation}
where $j^{a \mu}$ is the Kac-Moody current. The first-order correction to the action is also universal, which can be expressed as
\begin{align}
    S^{[\lambda]} = & S^{[0]} + \lambda_a^0 \int \td x \td y \left( j^{a y} T^{x}_{\ y} - j^{ax} T^{y}_{\ y} \right) + \lambda^1_a \int \td x \td y \left( j^{a y} T^{x}_{\ x} - j^{a x} T^{y}_{\ x} \right) + o (\lambda^{\mu}_a) \notag \\
    = & S^{[0]} + \lambda_a^0 \int \td x \td y J^{a}_y M + \lambda^1_a \int \td x \td y \left( J^{a}_y T - J^{a}_x M \right) + o (\lambda^{\mu}_a)
\end{align}
where the current without “$[\lambda]$” is the data of the seed theory, and we used \eqref{eq:Tcompo} and \eqref{eq:KMcurrent}. Through the path integral, the first-order correction to the correlation function will be
\begin{equation}
    \<X_n \>_{[\lambda]}^{JT_{\mu}} = \<X_n\>^{(0)}_{JT_{\mu}} + \lambda_a^{\mu} \<X_n \>_{\mu}^{a (1)} + o(\lambda_a^{\mu})
\end{equation}
where
\begin{equation}
    \begin{split}
        \<X_n\>^{(0)} & = \<X_n\>, \\
        \<X_n \>_0^{a (1)} & = - \int \td x \td y \< J^{a}_y M X_n \>, \\
        \<X_n \>_1^{a (1)} & = - \int \td x \td y \left( \< J^{a}_y T X_n\> - \< J^{a}_x M X_n \> \right).
    \end{split}
\end{equation}
We can use the Ward identities \eqref{eq:WardM} and \eqref{eq:JWard} to compute the first-order corrections for a generic $n$-point correlation function
\begin{equation}
    \<X_n\>^{a(1)}_0 = i \sum_{i,j = 1}^n \int \td x \td y \frac{\mathcal{F}^a_i}{x - x_i} \left[ \frac{\boldsymbol{\xi}_j}{(x - x_j)^2} + \frac{\p_{y_j}}{x-x_j} \right] \<X_n\>
\end{equation}
\begin{equation}
    \begin{split}
        \<X_n\>^{a(1)}_1 & = i \sum_{i,j = 1}^n \int \td x \td y \bigg\{ \left[ \frac{\mathcal{G}^a_i}{x - x_i} - \frac{y - y_i}{(x - x_i)^2} \mathcal{F}^a_i \right] \left[ \frac{\boldsymbol{\xi}_j}{(x - x_j)^2} + \frac{\p_{y_j}}{x-x_j} \right] \\
        & - \frac{\mathcal{F}^a_i}{x - x_i} \left[ \frac{\Delta_j}{(x-x_j)^2} - \frac{2(y - y_j)}{(x-x_j)^3} \boldsymbol{\xi}_j + \frac{\p_{x_j}}{x-x_j} - \frac{(y-y_j) \p_{y_j}}{(x-x_j)^2} \right] \bigg\} \<X_n\>
    \end{split}
\end{equation}
By using the integral scheme \eqref{eq:If}, one obtains
\begin{equation}
    \<X_n\>^{a(1)}_0 = - 2 \pi \sum_{i \neq j} \frac{y_{ij}}{x_{ij}^2} \mathcal{F}^a_i (\boldsymbol{\xi}_j + x_{ij} \p_{y_j} ) \<X_n\>,
\end{equation}
and
\begin{equation}
    \<X_n\>^{a(1)}_1 = 2 \pi \sum_{i \neq j} \frac{y_{ij}}{x_{ij}} \left[ \mathcal{F}^a_i \left( \frac{\Delta_j}{x_{ij}} + \frac{y_{ij}}{2 x_{ij}} \p_{y_j} + \p_{x_j} \right) - \mathcal{G}^a_i \left( \frac{\boldsymbol{\xi}_j}{x_{ij}} + \p_{y_j} \right) \right] \<X_n\>.
\end{equation}
Similar to the discussion of $T \overline{T}$ case, these generic and simple results do not completely manifest the pole structure of the deformed correlation functions at the first-order correction level. Unfortunately, even though the un-deformed pole structure and the rank are fixed, the extra pole structure yielded by the deformation still cannot be completely displayed without knowing the fields themselves because $\mathcal{F}^a_i$ depend on the internal structure of the $i$-th fields, which is different for distinct fields. Then there is no need for this subsection to discuss examples like 2-point and 3-point functions for $JT_{\mu}$ deformations, which will be left for the deformed free scalar and Fermion models.

\subsection{Root-$T \overline{T}$ deformed BMSFT} \label{sec:1strootTTbar}

The $\sqrt{T \overline{T}}$ deformation is defined as \cite{Ferko:2022cix} 
\begin{equation} \label{eq:rootdefine}
    \p_{\lambda} S^{[\lambda]} = \int \td x \td y R^{[\lambda]},
\end{equation}
where $\lambda$ is a dimensionless coupling constant, and $R^{[\lambda]}$ is defined as
\begin{equation} \label{eq:R}
    R^{[\lambda]} = \sqrt{\frac{1}{2} T^{A [\lambda]}_{\ B} T^{B [\lambda]}_{\ A} - \frac{1}{4} \left(T^{A [\lambda]}_{\ A} \right)^2} 
\end{equation}
Similarly, quantities like the stress tensor, the Lagrangian, and the action can be expanded as \eqref{eq:Taylor}. So the recursion relation can be derived from the following formula
\begin{equation} \label{eq:rootrecursion}
    \begin{split}
        \sum_{n = 0}^{\infty} \frac{\lambda^n}{n!} \mathcal{L}^{(n+1)} & = R^{[\lambda]} = \sqrt{\sum_{n = 0}^{\infty} \frac{\lambda^n}{n!} \sum_{i = 0}^n C_n^i \left( \frac{1}{2} T^{A (i)}_{\ B} T^{B (n-i)}_{\ A} - \frac{1}{4} T^{A (i)}_{\ A} T^{B (n - i)}_{\ B} \right)} 
    \end{split}
\end{equation}
The explicit form of $\mathcal{L}^{(n+1)}$ cannot be presented easily because we need to expand the square root around $\text{det} [T^{\mu (0)}_{\ \nu}]$ by the power of $\lambda$. Therefore the recursion relation in $\sqrt{T \overline{T}}$ case is not as simple as that in $T \overline{T}$ case \eqref{eq:recursion}. But the first-order correction to the Lagrangian is still universal
\begin{equation}
    \mathcal{L}^{(1)} = R^{(0)} = M.
\end{equation}
The corrections of correlators will have the same form as \eqref{eq:0thTTbar} \eqref{eq:1stTTbar} and \eqref{eq:2ndTTbar} with different $S^{(n)}$-s.

The first-order corrected correlator can be computed by using \eqref{eq:1stTTbar}
\begin{equation}
    \< X_n \>^{(1)}_{\sqrt{T \overline{T}}} = - \int \td x \td y \<M(x) X_n\> = - \sum_{k = 1}^n \int \td x \td y \frac{\p_{y_k}}{x-x_k} \<X_n\>.
\end{equation}
where the term associated with the boost charge $\frac{(\boldsymbol{\xi}_i \cdot O_i)_a}{(\Delta \Tilde{x}_i)^2}$ or $\frac{\xi_i \mathcal{O}_i}{(\Delta \Tilde{x}_i)^2}$ are zero by using the residue theorem. Since \eqref{eq:oint = 0} will be spoiled in this case, we cannot use the integral scheme \eqref{eq:attachk} directly. We need to divide the integral into 2 parts: $y>y_k, y<y_k$, but the contour of $x$ only contains half of the plane, namely, we can drop one range of $y$. Dropping $y>y_k$ or $y<y_k$ are equivalent, we will see the reason as follows. Firstly, we drop $y<y_k$ to compute contour of $x$ surrounding upper half plane the 1-st order correction as
\begin{equation}
    \< X_n \>^{(1)}_{\sqrt{T \overline{T}}} = - \sum_k \lim_{\Lambda \rightarrow \infty} \int^{\Lambda}_{y_k} dy \oint_{x_k} dx \frac{\p_{y_k}}{x-x_k} \<X_n\> = - 2 \pi i \sum_k (\Lambda - y_k) \p_{y_k} \<X_n\>
\end{equation}
Then, by using the translation conservation, namely $\sum_k \p_{y_k} = 0$, we obtain
\begin{equation} \label{eq:1stroot}
    \< X_n \>^{(1)}_{\sqrt{T \overline{T}}} = 2 \pi i \sum_k y_k \p_{y_k} \<X_n\>
\end{equation}
Next we drop $y>y_k$
\begin{equation}
    \< X_n \>^{(1)}_{\sqrt{T \overline{T}}} = \sum_k \lim_{\Lambda \rightarrow \infty} \int_{-\Lambda}^{y_k} dy \oint_{x_k} dx \frac{\p_{y_k}}{x-x_k} \<X_n\> = 2 \pi i \sum_k (\Lambda + y_k) \p_{y_k} \<X_n\>
\end{equation}
where the lower half-plane contour of $x$ has a minus sign difference from the upper half-plane. By using conservation law to omit $\Lambda$, we can similarly obtain \eqref{eq:1stroot}. As a final remark, the generic form of 1-st order corrected root-$T \overline{T}$ deformation is \eqref{eq:1stroot} no matter whether the un-deformed correlator is multiplet or singlet. The higher-order corrections are not universal and depend on different theories. Next, we will implement the $\sqrt{T \overline{T}}$ deformation to the free scalar model to see the higher-order effect.

\section{Deformed BMS free scalar model} \label{sec:scalarTTbar}


\subsection{Data of seed theory}

This subsection gives the data of the seed theory, the BMS free scalar model, to be well prepared for the deformation. We mainly review \cite{Hao:2021urq} here. The un-deformed Lagrangian of the BMSFT scalar model is
\begin{equation} \label{eq:ScalarL}
    \mathcal{L}^{(0)} = (\p_y \phi)^2
\end{equation}
The equation of motion is
\begin{equation} \label{eq:scalarEoM}
    \p_y^2 \phi = 0.
\end{equation}
There are three kinds of primary operators defined in the seed theory: the identity operator, a rank-2 multiplet
\begin{equation}
    O_0 (x) = i \p_y \phi, \quad O_1 (x,y) = i \p_x \phi,
\end{equation}
and a singlet vertex operator
\begin{equation}
    V_{\alpha} = : e^{\alpha \phi (x,y)} :.
\end{equation}
The components of stress tensor defined in the seed theory can be derived from the definition \eqref{eq:recursion} as
\begin{equation} \label{eq:T0BMSFT}
    T^{y (0)}_{\ x} = T = 2 \p_y \phi \p_x \phi, \quad T^{y (0)}_{\ y} = M = - T^{x (0)}_{\ x} = (\p_y \phi)^2, \quad T^{x (0)}_{\ y} = 0,
\end{equation}
which are consist with the generic discussion \eqref{eq:Tcompo}. The stress tensor equipped with the EoM of fields \eqref{eq:scalarEoM} is conserved, satisfying \eqref{eq:conservT}. The OPE between two scalar fields defined at different points is
\begin{equation} \label{eq:phiphiOPE}
    \phi(x_1, y_1) \phi(x_2, y_2) \sim - \frac{y_{12}}{x_{12}}.
\end{equation}
One can easily use the above OPE to derive the OPEs between primary operators and stress tensors
\begin{equation} \label{eq:TOOPE}
    \begin{split}
        T(x',y') O_0(x) \sim & \frac{O_0}{(x'-x)^2} + \frac{\p_x O_0}{x'-x}, \\
        T(x',y') O_1(x,y) \sim & \frac{O_1}{(x'-x)^2} + \frac{\p_x O_1}{x'-x} - \frac{2(y'-y)O_0}{(x'-x)^3} - \frac{y'-y}{(x'-x)^2} \p_y O_1, \\
        M(x') O_0 (x) \sim & 0, \quad M(x') O_1(x,y) \sim \frac{O_0 (x)}{(x'-x)^2} + \frac{\p_y O_1 (x,y)}{x'-x},
    \end{split}
\end{equation}
and
\begin{equation} \label{eq:TVOPE}
    \begin{split}
        T(x',y') V_{\alpha} (x,y) \sim & \frac{\p_y V_{\alpha} (x,y)}{x'-x} - \frac{(y'-y) \p_y V_{\alpha} (x,y)}{(x'-x)^2} + \frac{\alpha^2 (y'-y) V_{\alpha} (x,y)}{(x'-x)^3}, \\
        M(x') V_{\alpha} (x,y) \sim & \frac{\p_y V_{\alpha} (x,y)}{x'-x} - \frac{\alpha^2 V_{\alpha} (x,y)}{2 (x'-x)^2},
    \end{split}
\end{equation}
Comparing with the generic form \eqref{eq:WardM}, $O_a$ is a multiplet with weight $\Delta = 1$ and vanishing $\xi = 0$, while $V_{\alpha}$ is a singlet with boost charge $\xi = - \frac{\alpha^2}{2}$ and vanishing weight $\Delta = 0$. The OPEs between the components of the stress tensor can also be derived from the contraction \eqref{eq:phiphiOPE}, which are also verified to be consistent with the generic form \eqref{eq:TMOPE} with $c_L = 2$ and $c_M = 0$. 

Note that the seed Lagrangian \eqref{eq:ScalarL} also has the following affine $U(1)$ symmetry
\begin{equation}
    \phi \rightarrow \phi' (x,y) = \phi (x,y) + \Lambda (x).
\end{equation}
Correspondingly, the Noether current in the seed theory is
\begin{equation}
    j^{\mu}_{(0)} = - \frac{\p \mathcal{L}^{(0)}}{\p (\p_{\mu} \phi)} = (J, 0), \quad j^y_{(0)} = J_x = J = -2 \p_y \phi.
\end{equation}
The OPEs between $J$ and primary operators in the seed theory are
\begin{equation} \label{eq:JOOPE}
    \begin{split}
        J (x') O_0 (x,y) & \sim 0, \\
        J (x') O_1 (x,y) & \sim \frac{-i}{x' - x}, \\
        J (x') V_{\alpha} (x,y) & \sim \frac{- \alpha}{x' - x} V_{\alpha} (x,y),
    \end{split}
\end{equation}
which means that the $\mathcal{F}^a_i, \mathcal{G}^a_i$ defined in subsection \ref{sec:NLKM} are 
\begin{equation} \label{eq:G}
    \mathcal{G}_0 = 0, \quad \mathcal{G}_1 O_1 = - 1, \quad \mathcal{G}_V = i \alpha, \quad \mathcal{F}_i = 0.
\end{equation}
for $O_0, O_1$ and $V_{\alpha}$ respectively. Here we dropped the superscript $a$, since the current $J$ itself has no group indices.

The correlators among the primary operators can be derived from the OPEs between the fields. We present the non-vanishing correlators here
\begin{equation} \label{eq:Xscalar0}
    \begin{split}
        & \<O_0(x_1) O_1(x_2,y_2)\>^{(0)} = \frac{1}{x_{12}^2}, \quad \<O_1(x_1,y_1) O_1(x_2,y_2)\>^{(0)} = - \frac{2y_{12}}{x_{12}^3}, \\
        & \<O_0(x_1) V_{\alpha}(x_2,y_2) V_{-\alpha}(x_3,y_3)\>^{(0)} = - \frac{i \alpha x_{23}}{x_{12} x_{13}} e^{\alpha^2 \frac{y_{23}}{x_{23}}}, \\
        & \<O_1(x_1,y_1) V_{\alpha}(x_2,y_2) V_{-\alpha}(x_3,y_3)\>^{(0)} = - i \alpha \left( \frac{y_{12}}{x_{12}^2} - \frac{y_{13}}{x_{13}^2} \right) e^{\alpha^2 \frac{y_{23}}{x_{23}}} \\
        & \left\< \prod^n_{k=1} V_{\alpha_k}(x_k,y_k) \right\>^{(0)} = \text{exp} \left[ - \sum_{i<j} \alpha_i \alpha_j \frac{y_{ij}}{x_{ij}} \right] \delta_{0, \sum_i \alpha_i}
    \end{split}
\end{equation}

\subsection{$T \overline{T}$ deformation}

\subsubsection{Deformed Lagrangian}

As the discussion in section \ref{sec:1stTTbarJTaroot}, the $T \overline{T}$ deformed physical quantities like Lagrangian, stress tensor and action can be expanded in Taylor series by power of deformation parameter $\lambda$, and use the recursion relation \eqref{eq:recursion} to compute the corrections to the Lagrangian order by order. For example, the first several orders are
\begin{equation} \label{eq:Ln}
    \begin{split}
        & \mathcal{L}^{(0)} = (\p_y \phi)^2, \quad \mathcal{L}^{(1)} = - (\p_y \phi)^4, \quad \mathcal{L}^{(2)} = 4 (\p_y \phi)^6, \quad \mathcal{L}^{(3)} = -30 (\p_y \phi)^8, \\
        & \mathcal{L}^{(4)} = 336 (\p_y \phi)^{10}, \quad \mathcal{L}^{(5)} = - 5040 (\p_y \phi)^{12}, \quad \mathcal{L}^{(6)} = 95040 (\p_y \phi)^{14}, \\
        & \mathcal{L}^{(7)} = - 2162160 (\p_y \phi)^{16}, \quad \mathcal{L}^{(8)} = 57657600 (\p_y \phi)^{18}, \\
        & \mathcal{L}^{(9)} = -1764322560 (\p_y \phi)^{20}, \quad \mathcal{L}^{(10)} = 60949324800 (\p_y \phi)^{22},
    \end{split}
\end{equation}
where $\mathcal{L}^{(n)}$ is the $n$-th order of the expansion of the deformed Lagrangian, namely $\mathcal{L}^{[\lambda]} = \sum_{n = 0}^{\infty} \frac{\lambda^n}{n!} \mathcal{L}^{(n)}$. We observe that these terms can be cast into a general form
\begin{equation}
    \mathcal{L}^{(n)} = \frac{(-4)^n}{1 + n} \frac{\Gamma (\frac{1}{2} + n)}{\Gamma (\frac{1}{2})} (\p_y \phi)^{2n + 2},
\end{equation}
where the Gamma-function $\Gamma (x) = \int^{\infty}_0 \td t t^{x - 1} e^{-t}$ has been introduced. Implementing the summation \eqref{eq:Taylor} gives the closed form of the deformed Lagrangian
\begin{equation} \label{eq:NambuGoto}
    \mathcal{L}^{[\lambda]} = \sum_{n = 0}^{\infty} \frac{\lambda^n}{n!} \mathcal{L}^{(n)} = \frac{\sqrt{4 \lambda (\p_y \phi)^2 +1}-1}{2 \lambda}.
\end{equation}
In Appendix \ref{sec:DeformUR}, we have provided a detailed demonstration that this Lagrangian is equivalent to implementing the UR limit from the $T \overline{T}$ deformed Lagrangian of the free scalar model in the 2D relativistic CFT, as initially discussed in \cite{Cavaglia:2016oda}. For further information, refer to the review in \cite{Jiang:2019epa}. Consequently, \eqref{eq:NambuGoto} can be interpreted as the non-relativistic Nambu-Goto Lagrangian, suggesting that the $T \overline{T}$ deformation maps the local BMSFT scalar model into a non-local and non-relativistic bosonic string.

\subsubsection{Deformed correlators accurate to second order}

Now we can compute the deformed correlator constructed by the primary operators from the original data. From \eqref{eq:1stTTbar} and \eqref{eq:2ndTTbar}, to compute the deformed correlator accurate to second order of $\lambda$, one needs the data of $S^{(1)}$ and $S^{(2)}$, which we have been derived in \eqref{eq:Ln} as
\begin{equation}
    S^{(1)} = - \int \td x \td y (\p_y \phi)^4, \quad S^{(2)} = 4 \int \td x \td y (\p_y \phi)^6.
\end{equation}
Note that $(\p_y \phi)^4$ and $(\p_y \phi)^6$ are in the classical level, which need to be rewritten in the quantum level by normal ordering. However, the ways of quantization lead to different results. For example, $(\p_y \phi)^2 $ can be quantized as $ :\p_y \phi::\p_y \phi:$ or $:\p_y \phi \p_y \phi:$, which should be distinguished while calculating the deformed correlation functions, since they have distinct OPEs with other operators. Actually, the quantum version of deformed action depends on the data in the seed theory, or more precisely, it only depends on the un-deformed stress tensor $T = 2 : \p_y \phi \p_x \phi :, \ M = : \p_y \phi \p_y \phi :$ rather than $J = - 2: \p_y \phi :$, since the deformed action is only triggered by stress tensor. In particular, the first-order correction only depends on $M$ and must be $\int \td x \td y MM$, as discussed in previous sections. Moreover, the deformed action in the quantum level must be independent with $T$, since the deformed Lagrangian \eqref{eq:NambuGoto} is independent with $\p_x \phi$. In conclusion, the quantum level of the corrections are
\begin{equation}
    S^{(1)} = - \int \td x \td y MM, \quad S^{(2)} = 4 \int \td x \td y MMM
\end{equation}
Then, the deformed correlators can be derived as
\begin{equation}
    \< X_n \>_{[\lambda]}^{T \overline{T}} = \<X_n\> + \lambda \<X_n\>^{(1)}_{T \overline{T}} + \frac{\lambda^2}{2!} \<X_n\>^{(2)}_{T \overline{T}} + \mathcal{O}(\lambda^3) ,
\end{equation}
where
\begin{equation} \label{eq:XnTTbar12}
    \begin{split}
        \<X_n\>^{(1)}_{T \overline{T}} = & \int \td x \td y \<(MM) (x,y) X_n\>, \\
        \<X_n\>^{(2)}_{T \overline{T}} = & \int \td x \td y \int \td x' \td y' \<(MM) (x,y) (MM) (x',y') X_n\> \\
        & - 4 \int \td x \td y \<(MMM) (x,y) X_n\>
    \end{split}
\end{equation}
With the Ward identities \eqref{eq:WardM}, the generic form of the second order of the $T \overline{T}$ deformed BMS free scalar correlators can be derived as
\begin{equation}
    \begin{split}
        \<X_n\>^{(2)}_{T \overline{T}} = & - 4 \pi^2 \left[ \sum_{i \neq j} \frac{y_{ij}}{x_{ij}} \left( \frac{2}{x_{ij}^2} \boldsymbol{\xi}_i \boldsymbol{\xi}_j + \frac{1}{x_{ij}} \left( \boldsymbol{\xi}_i \p_{y_j} + \boldsymbol{\xi}_j \p_{y_i} \right) + \p_{y_j} \p_{y_i}  \right) \right]^2 \<X_n\> \\
        & - 8 \pi i \sum_{i \neq j \neq k} \Bigg[ \left( y_{jk} \frac{x_{jk} + x_{ji}}{x_{jk}^3} - y_{ik} \frac{x_{ik} + x_{ij}}{x_{ik}^3} \right) \frac{2 \boldsymbol{\xi}_i \boldsymbol{\xi}_j \boldsymbol{\xi}_k}{x_{ij}^3} \\
        & - \left( \frac{y_{jk}}{x_{jk}^3} + y_{ij} \frac{2 x_{ik} + x_{ij}}{x_{ij}^3} \right) \frac{3 \boldsymbol{\xi}_i \boldsymbol{\xi}_j \p_{y_k}}{x_{ik}} \\
        & + \left( \frac{y_{ik}}{x_{ik}^2} - \frac{y_{jk}}{x_{jk}^2} \right) \frac{3 \boldsymbol{\xi}_i \p_{y_j} \p_{y_k}}{x_{ij}} + \left( \frac{y_{ik}}{x_{ik}} - \frac{y_{jk}}{x_{jk}} \right) \frac{\p_{y_i} \p_{y_j} \p_{y_k}}{x_{ij}} \Bigg] \<X_n\> \\
        & + 24 \pi i \sum_{i \neq j} \frac{y_{ij}}{x_{ij}^2} \left[ 4 \frac{\boldsymbol{\xi}_i^2 \boldsymbol{\xi}_j}{x_{ij}^3} - 9 \frac{\boldsymbol{\xi}_i \boldsymbol{\xi}_j \p_{y_i}}{x_{ij}^4} - 3 \frac{\boldsymbol{\xi}_i \p_{y_i} \p_{y_j}}{x_{ij}} + \p_{y_i}^2 \p_{y_j} \right] \<X_n\>
    \end{split}
\end{equation}
The first-order $\<X_n\>^{(1)}_{T \overline{T}}$ has been derived in \eqref{eq:TTbar1st}. Actually, $S^{(n)}$ with different integers $n$ are only related to $M$, and independent with $T$. Therefore, in principle, the scheme of the integral \eqref{eq:ointall=0} is enough to derive the correction in all order. Since our purpose is just to see the flow effect to the poles of the correlators, we only need to be accurate to the second-order correction, instead of all-order correction. The rest of this sub-subsection is to manifest all the effects on the poles, by presenting some concrete examples.

\paragraph{Correlators full of vertex operators} The $n$-point vertex function in the seed theory is
\begin{equation} \label{eq:<Vertex>}
    \left\< \prod^n_{k=1} V_{\alpha_k}(x_k,y_k) \right\>^{(0)} = \text{exp} \left[ - \sum_{i<j} \alpha_i \alpha_j \frac{y_{ij}}{x_{ij}} \right] \delta_{0, \sum_i \alpha_i}.
\end{equation}
Then the first-order correction is
\begin{equation}
    \left\< \prod^n_{k=1} V_{\alpha_k} \right\>^{(1)}_{T \overline{T}} = \pi i \sum_{m \neq k} \alpha_m \alpha_k \frac{y_{km}}{x_{km}} \left[ 2 A_k A_m + 2 \frac{A_k \alpha_m}{x_{km}} - \frac{\alpha_k \alpha_m}{x_{km}^2} \right] \left\< \prod^n_{s=1} V_{\alpha_s} \right\>^{(0)}
\end{equation}
The second order correction is
\begin{equation}
    \begin{split}
        & \left\< \prod^n_{k=1} V_{\alpha_k} \right\>^{(2)}_{T \overline{T}} = \left\< \prod^n_{k=1} V_{\alpha_k} \right\>^{(0)} \times \bigg\{ - \pi^2 \left[ \sum_{m \neq k} \alpha_m \alpha_k \frac{y_{km}}{x_{km}} \left( 2 A_k A_m + 2 \frac{A_k \alpha_m}{x_{km}} - \frac{\alpha_k \alpha_m}{x_{km}^2} \right) \right]^2 \\
        & - 8 \pi i \sum_{\substack{m,k,s =1 \\ (k \neq m, k \neq s, m \neq s)}}^n \alpha_m \alpha_k \alpha_s \bigg[ \frac{A_m A_k A_s}{x_{ms}} \left( \frac{y_{mk}}{x_{mk}} - \frac{y_{sk}}{x_{sk}} \right) + \frac{3 \alpha_k A_m A_s}{2 x_{sm}} \left( \frac{y_{km}}{x_{km}^2} + \frac{y_{sk}}{x_{sk}^2} \right) \\
        & - \frac{\alpha_k \alpha_s \alpha_m}{4 x_{ms}^3} \left( y_{mk} \frac{x_{ms} + x_{mk}}{x_{mk}^3} - y_{sk} \frac{x_{ms} + x_{ks}}{x_{ks}^3} \right) + \frac{3 \alpha_k \alpha_m A_s}{x_{ms}^2} \left( \frac{y_{sk}}{x_{sk}^2} - y_{mk} \frac{2 x_{ms} + x_{mk}}{x_{mk}^3} \right) \bigg] \\
        & - 16 \pi i \sum_{m \neq k} \alpha_k \alpha_m^2 \frac{y_{km}}{x_{km}^5} \left( A_k A_m^2 x_{km}^3 - \frac{1}{2} \alpha_k \alpha_m^2 - \frac{9}{4} \alpha_k \alpha_m A_s x_{km} - 3 A_m^2 \alpha_k^2 x_{km}^2 \right) \bigg\}
    \end{split}
\end{equation}
Therefore the first and second-order corrections of the $n$-point vertex correlators are all factorized for arbitrary $n$. Further, we can easily deduce that the $T \overline{T}$ deformed $n$-point vertex correlators are factorized at all orders.

\paragraph{2-point} In seed theory, the two-point functions are
\begin{equation} \label{eq:2ptO}
    \begin{split}
        & \<O_0(x_1) O_0(x_2)\>^{(0)} = \<O_0(x',y') V_{\alpha}(x,y)\>^{(0)} = \<O_1(x',y') V_{\alpha}(x,y)\>^{(0)} = 0, \\
        & \<O_0(x_1) O_1(x_2,y_2)\>^{(0)} = \frac{1}{x_{12}^2}, \quad \<O_1(x_1,y_1) O_1(x_2,y_2)\>^{(0)} = - \frac{2y_{12}}{x_{12}^3}.
    \end{split}
\end{equation}
It is easy to confirm that the $MM$-insertion into the 2-point function of multiple primary operators in this free scalar case are all zero. Note that each order of $T\overline{T}$ deformed correlator is corrected by inserting one or more composite operators $MM$. Therefore, the correction of any order is exactly zero for a 2-point case
\begin{equation}
    \<O_a O_b\>^{(k)}_{T \overline{T}} = \<O_a V_{\alpha}\>^{(k)}_{T \overline{T}} = 0, \quad \forall k = 1,2, \cdots,
\end{equation}
which indicates that these two-point functions will not flow. In a non-perturbative way, we have
\begin{equation}
    \<O_a O_b\>_{[\lambda]}^{T \overline{T}} = \<O_a O_b\>^{(0)}, \quad  \<O_a V_{\alpha}\>_{[\lambda]}^{T \overline{T}} = \<O_a V_{\alpha}\>^{(0)}.
\end{equation}

\paragraph{3-point} The non-zero three-point functions in the seed theory are
\begin{equation} \label{eq:<OVV>}
    \begin{split}
        & \<O_0(x_1) V_{\alpha}(x_2,y_2) V_{-\alpha}(x_3,y_3)\>^{(0)} = - \frac{i \alpha x_{23}}{x_{12} x_{13}} e^{\alpha^2 \frac{y_{23}}{x_{23}}}, \\
        & \<O_1(x_1,y_1) V_{\alpha}(x_2,y_2) V_{-\alpha}(x_3,y_3)\>^{(0)} = - i \alpha \left( \frac{y_{12}}{x_{12}^2} - \frac{y_{13}}{x_{13}^2} \right) e^{\alpha^2 \frac{y_{23}}{x_{23}}}.
    \end{split}
\end{equation}
Their first-order corrections are
\begin{equation}
    \<O_{0} (\Vec{x}_1) V_{\alpha}(\Vec{x}_2) V_{-\alpha}(\Vec{x}_3)\>^{(1)}_{T \overline{T}} = 20 \pi i \alpha^4 \frac{y_{23}}{x_{23}^3} \<O_{0} (\Vec{x}_1) V_{\alpha}(\Vec{x}_2) V_{-\alpha}(\Vec{x}_3)\>^{(0)},
\end{equation}
and
\begin{equation}
    \begin{split}
        & \<O_1(x_1,y_1) V_{\alpha}(x_2,y_2) V_{-\alpha}(x_3,y_3)\>^{(1)}_{T \overline{T}} \\
        & = 2 \pi i \alpha^3 x_{23}^2 e^{\alpha^2 \frac{y_{23}}{x_{23}}} \bigg[ \frac{2x_{23}}{x_{13} x_{12}} \left( \frac{y_{13}}{x_{12}^3 x_{13}^2} + \frac{y_{23}}{x_{21}^3 x_{23}^2} \right) + \frac{y_{13}}{x_{12}^2 x_{13}^4} - \frac{y_{12}}{x_{13}^2 x_{12}^4} \\
        & + y_{23} \frac{6 x_{12} x_{13} - x_{23}^2}{x_{13}^2 x_{12}^2 x_{23}^4} + y_{23} \frac{3 x_{13} + x_{23}}{x_{12} x_{13}^2 x_{23}^4} + 5 \alpha^2 \frac{y_{23}}{x_{23}^5} \left( \frac{y_{13}}{x_{13}^2} - \frac{y_{12}}{x_{12}^2} \right) \bigg],
    \end{split}
\end{equation}
respectively. Their second-order corrections are 
\begin{equation}
    \begin{split}
        \<O_0 V_{\alpha} V_{-\alpha}\>^{(2)}_{T \overline{T}} = - \pi \alpha^6 \frac{y_{23}}{x_{23}^5} \left( 400 \pi \alpha^2 \frac{y_{23}}{x_{23}} - 756 i \right) \<O_0 V_{\alpha} V_{-\alpha}\>^{(0)}
    \end{split}
\end{equation}
and
\begin{align}
    & \<O_1(x_1,y_1) V_{\alpha}(x_2,y_2) V_{-\alpha}(x_3,y_3)\>^{(2)}_{T \overline{T}} \notag \\
    & = 2 \pi \alpha^5 x_{23}^4 e^{\alpha^2 \frac{y_{23}}{x_{23}}} \bigg\{ \frac{12 x_{23}}{x_{12} x_{13}^6} \left[ \frac{y_{12}}{x_{12}^4} \left( \frac{x_{13}}{x_{12}} + 1 \right) + \frac{y_{23}}{x_{23}^7} \left( 5 x_{13}^3 + 5 x_{13}^2 x_{23} + 3 x_{13} x_{23}^2 + x_{23}^3 \right) \right] \notag \\
    & \quad - \frac{5}{2} \left[ \frac{y_{32}}{x_{12} x_{13} x_{23}^5} \left( \frac{70}{x_{23}^3} - \frac{5}{x_{23} x_{13} x_{12}} + \frac{x_{23}}{x_{13}^2 x_{12}^2} \left( 2 - \frac{x_{23}^2}{x_{13} x_{12}} \right) \right) + \frac{y_{12}}{x_{12} x_{13}^4} - \frac{y_{13}}{x_{13} x_{12}^4} \right] \notag \\
    & \quad - 252 \left( \frac{y_{12}}{x_{12}^2} - \frac{y_{13}}{x_{13}^2} \right) \frac{y_{23}}{x_{23}^9} \bigg\} + 40 \pi^2 i \frac{y_{23}}{x_{23}} \alpha^7 e^{\alpha^2 \frac{y_{23}}{x_{23}}} \bigg\{ \left( \frac{y_{12}}{x_{12}^2} - \frac{y_{13}}{x_{13}^2} \right) \left( \frac{1}{x_{12}^2 x_{13}^2} + \frac{5}{2} \alpha^5 \frac{y_{23}}{x_{23}^5} \right) \notag \\
    & \quad + y_{23} \frac{x_{23}^2 - 6 x_{12} x_{13}}{x_{12}^2 x_{13}^2 x_{23}^4} + \frac{2 x_{23}}{x_{12}^4 x_{13}} \left[ \frac{y_{23}}{x_{23}^3} (x_{23} + x_{21}) - \frac{y_{13}}{x_{13}^3} (x_{13} + x_{12}) \right] \bigg\}
\end{align}
respectively.

\subsection{$JT_{\mu}$ deformation}

The $JT_{\mu}$ deformation can be constructed from the affine current $j$ and the stress tensor growing in the seed theory as
\begin{equation} \label{eq:JTL}
    \frac{\p \mathcal{L}^{[\lambda]}}{\p \lambda^{\mu}} = \epsilon_{\alpha \beta} j^{\alpha}_{[\lambda]} T^{\beta [\lambda]}_{\ \mu} = j^y_{[\lambda]} T^{x [\lambda]}_{\ \mu} - j^x_{[\lambda]} T^{y [\lambda]}_{\ \mu}
\end{equation}
In a perturbative level, since the composite operator of $JT_{\mu}$ deformation is a vector operator with two components, the quantities needed to be computed should be expanded by the power of two coupling constants $\lambda^0, \lambda^1$, and they might be mixed while implementing the Taylor expansion, which will make the problem more complex. However, things will become much easier if the two coupling constants are not mixed. Fortunately, in the case we discussed here, we will prove that the contribution of $\lambda^0$ is zero, or in other words, there is only one coupling constant $\lambda^1$. We expand the quantities as follows
\begin{equation} \label{eq:expandLTj}
    \begin{split}
        \mathcal{L}^{[\lambda]} & = \sum_{n = 0}^{\infty} \frac{1}{n!} \lambda^{\mu_1} \cdots \lambda^{\mu_n} \mathcal{L}_{\mu_1 \cdots \mu_n}, \\
        T^{\alpha [\lambda]}_{\ \mu} & = \sum_{n = 0}^{\infty} \frac{1}{n!} \lambda^{\mu_1} \cdots \lambda^{\mu_n} (T^{\alpha}_{\ \mu})_{\mu_1 \cdots \mu_n}, \\
        j^{\alpha}_{[\lambda]} & = \sum_{n = 0}^{\infty} \frac{1}{n!} \lambda^{\mu_1} \cdots \lambda^{\mu_n} (j^{\alpha})_{\mu_1 \cdots \mu_n},
    \end{split}
\end{equation}
where the extra indices $\mu_1 \cdots \mu_n$ are all symmetrical, namely
\begin{equation}
    \mathcal{L}_{\mu_1 \cdots \mu_n} = \mathcal{L}_{(\mu_1 \cdots \mu_n)}, \quad (T^{\alpha}_{\ \mu})_{\mu_1 \cdots \mu_n} = (T^{\alpha}_{\ \mu})_{(\mu_1 \cdots \mu_n)}, \quad (j^{\alpha})_{\mu_1 \cdots \mu_n} = (j^{\alpha})_{(\mu_1 \cdots \mu_n)}.
\end{equation}
Then the recursion relation like \eqref{eq:recursion} can be derived as
\begin{equation}
    \begin{split}
        \mathcal{L}_{\mu_1 \cdots \mu_n \mu_{n+1}} & = \epsilon_{\alpha \beta} \sum_{i = 0}^n C_n^i (j^{\alpha})_{\mu_{i+1} \cdots \mu_n} (T^{\beta}_{\ \mu_{n+1}})_{\mu_1 \cdots \mu_i} \\
        & = \sum_{i = 0}^n C_n^i \left[ (j^y)_{\mu_{i+1} \cdots \mu_n} (T^x_{\ \mu_{n+1}})_{\mu_1 \cdots \mu_i} - (j^x)_{\mu_{i+1} \cdots \mu_n} (T^y_{\ \mu_{n+1}})_{\mu_1 \cdots \mu_i} \right] 
    \end{split}
\end{equation}
where
\begin{equation} \label{eq:Tjn-th}
    (T^{\alpha}_{\ \mu})_{\mu_1 \cdots \mu_n} = \frac{\p \mathcal{L}_{\mu_1 \cdots \mu_n}}{\p (\p_{\alpha} \phi)} \p_{\mu} \phi - \delta^{\alpha}_{\ \mu} \mathcal{L}_{\mu_1 \cdots \mu_n}, \quad (j^{\beta})_{\mu_{i+1} \cdots \mu_n} = - \frac{\p \mathcal{L}_{\mu_{i+1} \cdots \mu_n}}{\p (\p_{\beta} \phi)}.
\end{equation}
and $(T^{\alpha}_{\ \mu})_{\mu_1 \cdots \mu_i} = T^{\alpha [0]}_{\ \mu}$ for $i = 0$ while $(j^{\beta})_{\mu_{i+1} \cdots \mu_n} = j^{\beta}_{[0]}$ for $i = n$. We then prove that
\begin{equation} \label{eq:jx=0}
    (j^x)_{\mu_1 \cdots \mu_n} = - \frac{\p \mathcal{L}_{\mu_1 \cdots \mu_n}}{\p (\p_x \phi)} = 0, 
    \quad \forall n \in \mathbb{N}
\end{equation}
by using induction on $n$. For $n = 0$, then
\begin{equation}
    \begin{split}
        & \mathcal{L}_{\mu_1 \cdots \mu_n} \big|_{n = 0} = \mathcal{L}_{[0]}, \quad j^x_{\mu_1 \cdots \mu_n} \big|_{n = 0} = j^x_{[0]} = - \frac{\p \mathcal{L}_{[0]}}{\p (\p_x \phi)} = 0, \\
    \end{split}
\end{equation}
which clearly satisfies the $n = 0$ case in \eqref{eq:jx=0}. Suppose \eqref{eq:jx=0} is true for $n$ less than $m+1$
\begin{equation} \label{eq:jx=0,m}
    (j^x)_{\mu_1 \cdots \mu_n} = - \frac{\p \mathcal{L}_{\mu_1 \cdots \mu_n}}{\p (\p_x \phi)} = 0, \quad n = 0, 1, 2, \cdots m,
\end{equation}
then the $m+1$-th order of Lagrangian is
\begin{equation}
    \mathcal{L}_{\mu_1 \cdots \mu_m \mu_{m+1}} = \sum_{i = 0}^m C_n^i (j^y)_{\mu_{i+1} \cdots \mu_m} (T^x_{\ \mu_{m+1}})_{\mu_1 \cdots \mu_i}
\end{equation}
From the definition \eqref{eq:Tjn-th}, together with the induction hypothesis \eqref{eq:jx=0,m}, one can easily express the currents lower than $m+1$-th order as
\begin{equation}
    (j^y)_{\mu_{i+1} \cdots \mu_m} = - \frac{\p \mathcal{L}_{\mu_{i+1} \cdots \mu_m}}{\p (\p_y \phi)},
\end{equation}
\begin{equation}
    (T^x_{\ \mu_{m+1}})_{\mu_1 \cdots \mu_i} = \frac{\p \mathcal{L}_{\mu_1 \cdots \mu_i}}{\p (\p_x \phi)} \p_{\mu_{m+1}} \phi - \delta^x_{\ \mu_{m+1}} \mathcal{L}_{\mu_1 \cdots \mu_i} = - \delta^x_{\ \mu_{m+1}} \mathcal{L}_{\mu_1 \cdots \mu_i},
\end{equation}
which indicate that they are all independent with $\p_x \phi$ because of \eqref{eq:jx=0,m}
\begin{align}
    & \frac{\p (j^y)_{\mu_{i+1} \cdots \mu_m}}{\p (\p_x \phi)} = - \frac{\p^2 \mathcal{L}_{\mu_{i+1} \cdots \mu_m}}{\p (\p_x \phi) \p (\p_y \phi)} = 0, \\
    & \frac{\p (T^x_{\ \mu_{m+1}})_{\mu_1 \cdots \mu_i}}{\p (\p_x \phi)} = - \delta^x_{\ \mu_{m+1}} \frac{\p \mathcal{L}_{\mu_1 \cdots \mu_i}}{\p (\p_x \phi)} = 0.
\end{align}
Thus we deduce that the current in the $m+1$-th order correction as
\begin{equation}
    (j^x)_{\mu_1 \cdots \mu_{m+1}} = - \frac{\p \mathcal{L}_{\mu_1 \cdots \mu_{m+1}}}{\p (\p_x \phi)} = 0
\end{equation}
Thus the eq \eqref{eq:jx=0} has been proved, in other words, only $j^y_{[\lambda]}$ contribute to the corrections of deformation. It turns out that the $n$-th order correction of the deformed Lagrangian is
\begin{equation}
    \mathcal{L}_{\mu_1 \cdots \mu_n} = \sum_{i = 0}^n C_n^i (j^y)_{\mu_{i+1} \cdots \mu_n} (T^x_{\ \mu_{n+1}})_{\mu_1 \cdots \mu_i}, \quad \forall n \in \mathbb{N},
\end{equation}
By using the definition of the currents \eqref{eq:Tjn-th}, one obtains
\begin{equation}
    \begin{split}
        & (T^x_{\ x})_{\mu_1 \cdots \mu_n} = - \mathcal{L}_{\mu_1 \cdots \mu_n}, \quad (T^x_{\ y})_{\mu_1 \cdots \mu_n} = 0, \\
        & (j^y)_{\mu_{i+1} \cdots \mu_n} = - \frac{\p \mathcal{L}_{\mu_1 \cdots \mu_n}}{\p (\p_y \phi)}, \quad (j^x)_{\mu_{i+1} \cdots \mu_n} = 0
    \end{split}
\end{equation}
which indicates that
\begin{equation} \label{eq:jxTxy=0}
    T^{x [\lambda]}_{\ y} = 0, \quad j^x_{[\lambda]} = 0,
\end{equation}
or in other words, the only choice of the vector index $\mu$ in the definition \eqref{eq:JTL} must be $\mu = x$, otherwise, the RHS of \eqref{eq:JTL} will vanish
\begin{equation}
    \frac{\p \mathcal{L}^{[\lambda]}}{\p \lambda^1} = j^y_{[\lambda]} T^{x [\lambda]}_{\ x}, \quad \frac{\p \mathcal{L}^{[\lambda]}}{\p \lambda^0} = j^y_{[\lambda]} T^{x [\lambda]}_{\ y} = 0.
\end{equation}
So the coupling constant $\lambda^0$ will not appear in the definition \eqref{eq:JTL}. Hence, the corrections of currents will no longer depend on $\lambda^0$ either since they are all derived from the Lagrangian.

\subsubsection{Deformed Lagrangian}

Since all of the quantities in the deformed story only depend on $\lambda^1$, then \eqref{eq:expandLTj} will no longer be the double-coefficient expansion. For convenience, we substitute $\lambda$ for $\lambda^1$, and use \eqref{eq:jxTxy=0} to rewrite the definition \eqref{eq:JTL} as
\begin{equation} \label{eq:1lambdaJTaL}
    \frac{\p \mathcal{L}^{[\lambda]}}{\p \lambda} = j^y_{[\lambda]} T^{x [\lambda]}_{\ x}.
\end{equation}
Then the expansion \eqref{eq:expandLTj} should be rewritten as
\begin{equation} \label{eq:TaylorJTmu}
    \mathcal{L}^{[\lambda]} = \sum_{n = 0}^{\infty} \frac{\lambda^n}{n!} \mathcal{L}^{(n)}, \quad T^{x [\lambda]}_{\ x} = \sum_{n = 0}^{\infty} \frac{\lambda^n}{n!} T^{x (n)}_{\ x}, \quad j^y_{[\lambda]} = \sum_{n = 0}^{\infty} \frac{\lambda^n}{n!} j^y_{(0)},
\end{equation}
like we did in $T \overline{T}$ case. Then the recursion relation can be simply rewritten as
\begin{equation} \label{eq:recursionJTa}
    \mathcal{L}^{(n+1)} = \sum_{i = 0}^n C_n^i j^y_{(n-i)} T^{x (i)}_{\ x}, \quad T^{x (i)}_{\ x} = - \mathcal{L}^{(i)}, \quad j^y_{(n-i)} = - \frac{\p \mathcal{L}^{(n - i)}}{\p (\p_y \phi)}.
\end{equation}
Then the deformed Lagrangian can be derived order by order from the above recursion relation. For example, we show the result for the first several orders of the deformed Lagrangian
\begin{equation} \label{eq:L10JTa}
    \begin{split}
        & \mathcal{L}^{(0)} = (\p_y \phi)^2, \quad \mathcal{L}^{(1)} = 2 (\p_y \phi)^3, \quad \mathcal{L}^{(2)} = 10 (\p_y \phi)^4, \quad \mathcal{L}^{(3)} = 84 (\p_y \phi)^5, \\
        & \mathcal{L}^{(4)} = 1008 (\p_y \phi)^6, \quad \mathcal{L}^{(5)} = 15840 (\p_y \phi)^7, \quad \mathcal{L}^{(6)} = 308880 (\p_y \phi)^8, \\
        & \mathcal{L}^{(7)} = 7207200 (\p_y \phi)^9, \quad \mathcal{L}^{(8)} = 196035840 (\p_y \phi)^{10}, \\
        & \mathcal{L}^{(9)} = 6094932480 (\p_y \phi)^{11}, \quad \mathcal{L}^{(10)} = 213322636800 (\p_y \phi)^{12}.
    \end{split}
\end{equation}
These terms can also be cast into a general form,
\begin{equation}
    \mathcal{L}^{(n)} = \frac{2^{1 + 2n}}{(1 + n) (2 + n)} \frac{\Gamma (\frac{3}{2} + n)}{ \Gamma (\frac{3}{2})} (\p_y \phi)^{n+2},
\end{equation}
where $\Gamma (x)$ is Gamma-function, which has been introduced while discussing $T \Bar{T}$ deformation. Then, implementing the summation \eqref{eq:TaylorJTmu} gives the closed form of the $JT_{\mu}$ deformed Lagrangian
\begin{equation}
    \mathcal{L}^{[\lambda]} = \sum_{n = 0}^{\infty} \lambda^n \mathcal{L}^{(n)} = \frac{1 - 2 \lambda \p_y \phi - \sqrt{1 - 4 \lambda \p_y \phi }}{2 \lambda ^2}.
\end{equation}
Similarly to \eqref{eq:NambuGoto}, this Lagrangian also indicates that the $JT_{\mu}$ deformation maps the local BMS free scalar to a non-local theory, aligning with the well-known characteristic of an irrelevant deformation.

\subsubsection{Deformed correlation functions}

By expanding the action as $S^{[\lambda]} = \sum_n \frac{\lambda^n}{n!} S^{(n)}$, where $S^{(n)} = \int \td x \td y \mathcal{L}^{(n)}$, one can similarly derive the corrections for the deformed correlation function as
\begin{equation}
    \<X_n\>_{[\lambda]}^{JT_{\mu}} = \sum_{n = 0}^{\infty} \frac{\lambda^n}{n!} \<X_n\>^{(n)},
\end{equation}
where the corrections are formally the same as \eqref{eq:0thTTbar} \eqref{eq:1stTTbar} and \eqref{eq:2ndTTbar}
\begin{align}
    \<X_n\>^{(0)} = & \<X_n\>_{[\lambda = 0]}^{JT_{\mu}} = \<X_n\> \label{eq:0thJTa} \\
    \<X_n\>^{(1)}_{JT_{\mu}} = & \left\< S^{(1)} \right\> \<X_n\> - \left\< S^{(1)} X_n \right\>, \label{eq:1stJTa} \\
    \<X_n\>^{(2)}_{JT_{\mu}} = & \left\< S^{(1)} S^{(1)} X_n \right\> - \left\< S^{(1)} S^{(1)} \right\> \< X_n \> + \left\< S^{(2)} \right\> \<X_n\> -  \left\< S^{(2)} X_n \right\> \notag \\
    & + 2 \left\< S^{(1)} \right\>^2 \<X_n\> - 2 \left\< S^{(1)} \right\> \left\< S^{(1)} X_n \right\> \label{eq:2ndJTa} \\
    \vdots & \notag
\end{align}
The first two corrections of the action are \eqref{eq:L10JTa}
\begin{equation}
    S^{(1)} = 2 \int \td x \td y (\p_y \phi)^3 
    , \quad S^{(2)} = 10 \int \td x \td y (\p_y \phi)^4 
\end{equation}
%
which are the corrections of action in the classical level. Similar to the discussion of $T \overline{T}$, to derive the corrected correlation functions perturbatively, the above corrections of action should be promoted to quantum level by using normal ordering. However, things will become more complex in $JT_{\mu}$ case than that in $T \overline{T}$ case, because the $JT_{\mu}$ deformation is triggered by $U(1)$ current and stress tensor, which indicates that the deformed action in quantum level is not only dependent on the stress tensor $T,M$ in the seed theory, but it also depends on the un-deformed $U(1)$ current $J_x (= J)$. We have proved that the $JT_{\mu}$ deformed Lagrangian of the free scalar is independent with $\p_x \phi$, such that the deformed quantities are not dependent on $T$. The first-order correction here must match the most generic case discussed in the subsection \ref{sec:JTa1st} with $J_y = 0$ here, namely the first-order correction is always quantized as
\begin{equation}
    S^{(1)} = \int \td x \td y \mathcal{L}^{(1)}, \quad \mathcal{L}^{(1)} = - JM
\end{equation}
Then the first-order corrected $JT_{\mu}$ deformed correlator is
\begin{equation}
    \<X_n\>^{(1)}_{JT_{\mu}} = \int \td x \td y \<JM X_n\>
\end{equation}
Unlike $T \overline{T}$ case, $(\p_y \phi)^4$ can be quantized as $JJM$ or $MM$ in $JT_{\mu}$ deformation. So the quantization of the second order correction can be expressed as the linear combination of $\int \td x \td y JJM$ and $\int \td x \td y MM$. Fortunately, the coefficient in front of them can be uniquely fixed. We can see this precisely from the recursion relation \eqref{eq:recursionJTa}
\begin{equation}
    \mathcal{L}^{(2)} = j^y_{(0)} T^{x(1)}_{\ x} + j^y_{(1)} T^{x(0)}_{\ x} = JJM - M \left( \frac{\p J}{\p (\p_y \phi)} M + J \frac{\p M}{\p (\p_y \phi)} \right),
\end{equation}
where we used
\begin{equation*}
    T^{x (1)}_{\ x} = - \mathcal{L}^{(1)} = JM, \quad j^y_{(0)} = J, \quad T^{x (0)}_{\ x} = -M, \quad j^y_{(1)} = \frac{\p J}{\p (\p_y \phi)} M + J \frac{\p M}{\p (\p_y \phi)}.
\end{equation*}
As we discussed before, $\frac{\p J}{\p (\p_y \phi)}, \frac{\p M}{\p (\p_y \phi)}$ only depend on $J,M$, so the quantization of $\frac{\p J}{\p (\p_y \phi)}, \frac{\p M}{\p (\p_y \phi)}$ are unique
\begin{equation}
    \frac{\p J}{\p (\p_y \phi)} = -2, \quad \frac{\p M}{\p (\p_y \phi)} = 2 : \p_y \phi : = - J
\end{equation}
Therefore
\begin{equation}
    S^{(2)} = \int \td x \td y \mathcal{L}^{(2)}, \quad \mathcal{L}^{(2)} = 2 (JJM + MM).
\end{equation}
Then the second order corrected $JT_{\mu}$ deformed correlator is
\begin{equation}
    \begin{split}
        \<X_n\>^{(2)}_{JT_{\mu}} = & \int \td x \td y \int \td x' \td y' \< JM(x,y) JM(x',y') X_n\> \\
        & - 2 \int \td x \td y \<JJM (x,y) X_n\> - 2 \<X_n\>^{(1)}_{T \overline{T}}
    \end{split}
\end{equation}
Then we can use the Ward identities \eqref{eq:WardM} and \eqref{eq:JWard} $\mathcal{F}_i = 0$ 
to compute the generic form of $\<X_n\>^{(1)}_{JT_{\mu}}$ and $\<X_n\>^{(2)}_{JT_{\mu}}$ in free scalar case as
\begin{equation}
    \<X_n\>^{(1)}_{JT_{\mu}} = - 2 \pi \sum_{i \neq j} \frac{y_{ij}}{x_{ij}^2} \mathcal{G}_i (\boldsymbol{\xi}_j + x_{ij} \p_{y_j}) \<X_n\>
\end{equation}
and
\begin{equation}
    \begin{split}
        \<X_n\>^{(2)}_{JT_{\mu}} & = - 2 \<X_n\>_{T \overline{T}}^{(1)} + 4 \pi^2 \left[ \sum_{i \neq j} \frac{y_{ij}}{x_{ij}^2} \mathcal{G}_i (\boldsymbol{\xi}_j + x_{ij} \p_{y_j}) \right]^2 \<X_n\> \\
        & + 4 \pi i \bigg\{ \sum_{i \neq j \neq k} \frac{\mathcal{G}_i \mathcal{G}_j}{x_{ij}} \left[ \frac{y_{ik}}{x_{ik}^2} (\boldsymbol{\xi}_k + x_{ik} \p_{y_k}) - \frac{y_{jk}}{x_{jk}^2} (\boldsymbol{\xi}_k + x_{jk} \p_{y_k}) \right] \\
        & + 2 \sum_{i \neq j} \mathcal{G}_i \mathcal{G}_j \frac{y_{ij}}{x_{ij}^3} (\boldsymbol{\xi}_j + x_{ij} \p_{y_j}) - \sum_{i \neq j} \mathcal{G}_i^2 \frac{y_{ij}}{x_{ij}^3} (\boldsymbol{\xi}_j + 2 x_{ij} \p_{y_j}) \bigg\} \<X_n\>
    \end{split}
\end{equation}
where $\<X_n\>_{T \overline{T}}^{(1)}$ has been derived in \eqref{eq:TTbar1st}.

\subsubsection{Examples}

Then we need to manifest the deformed poles by showing some examples, like we did in $T \overline{T}$ deformation. We will compute the deformed correlators whose seeds have been presented in \eqref{eq:Xscalar0}.

\paragraph{Correlators full of vertex operators} The first-order correction consisting of $n$-point vertex operators is
\begin{equation}
    \<\prod^n_{k = 1} V_{\alpha_k} (x_k, y_k) \>^{(1)}_{JT_{\mu}} = \pi i \sum_{i \neq j} \frac{y_{ij}}{x_{ij}^2} \alpha_i \alpha_j \bigg(-\alpha_j + x_{ij} \sum_{s (\neq j)} \frac{\alpha_s}{x_{sj}} \bigg) \<\prod^n_{k = 1} V_{\alpha_k} (x_k, y_k) \>^{(0)}
\end{equation}
The second order correction is
\begin{equation}
    \begin{split}
        & \<\prod^n_{p = 1} V_{\alpha_p} (x_p, y_p) \>^{(2)}_{JT_{\mu}} \\
        & = - 2 \<X_n\>_{T \overline{T}}^{(1)} - \pi^2 \bigg[ \sum_{i \neq j} \frac{y_{ij}}{x_{ij}^2} \alpha_i \alpha_j \bigg( \alpha_j - x_{ij} \sum_{s (\neq j)} \frac{\alpha_s}{x_{sj}} \bigg) \bigg]^2 \<X_n\> \\
        & + 2 \pi i \bigg\{ \sum_{i \neq j \neq k} \frac{\alpha_i \alpha_j \alpha_k}{x_{ij}} \bigg[ \frac{y_{ik}}{x_{ik}^2} \bigg( \alpha_k - x_{ik} \sum_{s (\neq k)} \frac{\alpha_s}{x_{sk}} \bigg) - \frac{y_{jk}}{x_{jk}^2} \bigg( \alpha_k - x_{jk} \sum_{s (\neq k)} \frac{\alpha_s}{x_{sk}} \bigg) \bigg] \\
        & + \sum_{i \neq j} \alpha_i \alpha_j \frac{y_{ij}}{x_{ij}^3} \bigg[ \alpha_j^2 + \alpha_i \alpha_j - x_{ij} (2 \alpha_i + \alpha_j) \sum_{s (\neq j)} \frac{\alpha_s}{x_{sj}} \bigg] \bigg\} \<\prod^n_{p = 1} V_{\alpha_p} (x_p, y_p) \>
    \end{split}
\end{equation}
%
The first and second-order corrections are all factorized.

\paragraph{2-point} Note that the two-point functions in \eqref{eq:Xscalar0} vanishes while inserting the operators $\mathcal{G}_i$ defined in \eqref{eq:G}. Therefore, the first-order correction for two-point functions all vanished, and the second-order corrections $\<X_2\>^{(2)}_{JT_{\mu}} = -2 \<X_2\>^{(1)}_{T \overline{T}}$, which are the first-order corrections of $T \overline{T}$ deformation derived in \eqref{eq:2ptO}. Moreover, \eqref{eq:2ptO} shows that $\<X_2\>^{(1)}_{T \overline{T}}$ are all vanished. Therefore, the first-order and second-order correction of the two-point functions all vanished, namely
\begin{equation}
    \<X_2\>_{JT_{\mu}}^{(1)} = 0, \quad \<X_2\>_{JT_{\mu}}^{(2)} = 0.
\end{equation}

\paragraph{3-point} The first-order corrections of the three-point functions presented in \eqref{eq:Xscalar0} are
\begin{equation}
    \<O_0(\Vec{x}_1) V_{\alpha}(\Vec{x}_2) V_{-\alpha}(\Vec{x}_3)\>^{(1)}_{JT_{\mu}} = - 6 \pi i \alpha^3 \frac{y_{23}}{x_{23}^2} \<O_0(\Vec{x}_1) V_{\alpha}(\Vec{x}_2) V_{-\alpha}(\Vec{x}_3)\>
\end{equation}
and
\begin{align}
    & \<O_1(\Vec{x}_1) V_{\alpha}(\Vec{x}_2) V_{-\alpha}(\Vec{x}_3)\>^{(1)}_{JT_{\mu}} \notag \\
    & = 2 \pi \alpha^2 \left[ \frac{y_{12}}{x_{12}^2} \left( 2 \frac{x_{12}}{x_{23}} - 1 \right) - \frac{y_{13}}{x_{13}^2} \left( 2 \frac{x_{13}}{x_{23}} + 1 \right) \right] \< V_{\alpha}(\Vec{x}_2) V_{- \alpha} (\Vec{x}_3) \>^{(0)} \notag \\
    & + \frac{2 \pi i \alpha}{x_{12} x_{13} x_{23}} \left[ \left( \frac{y_{23}}{x_{23}} + \frac{y_{12}}{x_{12}} - \frac{y_{13}}{x_{13}} \right) x_{12}^2 - \left( \frac{y_{23}}{x_{23}} + \frac{y_{13}}{x_{13}} - \frac{y_{12}}{x_{12}} \right) x_{13}^2 \right] \<O_0 (\Vec{x}_1) V_{\alpha} (\Vec{x}_2) V_{- \alpha} (\Vec{x}_3) \>^{(0)} \notag \\
    & + 2 \pi i \alpha \left( \frac{x_{23}}{x_{12} x_{13}} - 3 \alpha^2 \frac{y_{23}}{x_{23}^2} \right) \<O_1 (\Vec{x}_1) V_{\alpha} (\Vec{x}_2) V_{- \alpha} (\Vec{x}_3) \>^{(0)}
\end{align}
The second-order corrections are
\begin{equation}
    \begin{split}
         & \< O_0 V_{\alpha} V_{-\alpha} \>^{(2)}_{JT_{\mu}} = - 12 \pi \alpha^4 \frac{y_{23}}{x_{23}^3} \left( 7 i + 3 \pi \alpha^2 \frac{y_{23}}{x_{23}} \right) \< O_0 V_{\alpha} V_{-\alpha} \>^{(0)}
    \end{split}
\end{equation}
and
\begin{equation}
    \begin{split}
         & \<O_1(\Vec{x}_1) V_{\alpha}(\Vec{x}_2) V_{-\alpha}(\Vec{x}_3)\>^{(2)}_{JT_{\mu}} \\
         & = i \alpha^3 \bigg[ \frac{y_{23}}{2 x_{23}} \left( \frac{1}{x_{12}^2} - \frac{1}{x_{13}^2} + \frac{2}{x_{12} x_{13}} + \frac{24 \pi i}{x_{23} x_{12}} \right) + 4 \pi i \left( \frac{y_{12}}{x_{12}^3} - \frac{y_{13}}{x_{13}^3} \right) \\
         &  \quad - \frac{1}{x_{23}} \left( \frac{\alpha^2}{2} \frac{y_{23}}{x_{23}} + 8 \pi i \right) \left( \frac{y_{12}}{x_{12}^2} + \frac{y_{13}}{x_{13}^2} \right) - \frac{8 \pi i y_{13}}{x_{12} x_{13}} \left( \frac{1}{2 x_{13}} + \frac{1}{x_{23}} \right) \\
         & \quad - \left( \frac{y_{12}}{x_{12}} - \frac{y_{13}}{x_{13}} \right) \left( \frac{x_{23}^2}{2 x_{12}^2 x_{13}^2} + 5 \alpha^2 \frac{y_{23}}{x_{23}^3} + \frac{3}{x_{23}^2} \right) \bigg] \< V_{\alpha}(\Vec{x}_2) V_{-\alpha}(\Vec{x}_3) \>^{(0)} \\
         & + \alpha^2 \bigg[ \left( \frac{y_{12}}{x_{12}} - \frac{y_{13}}{x_{13}} \right) \left( \frac{1}{x_{12}^2} - \frac{1}{x_{13}^2} \right) \left( 1 + 8 \pi i \frac{x_{12} x_{13}}{x_{23}^2} \right) - \frac{y_{23}}{x_{12} x_{13} x_{23}} \\
         & \quad + \frac{1}{x_{23}} \left( 8 \pi i + 3 \alpha^2 \frac{y_{23}}{x_{23}} \right) \left( \frac{y_{12}}{x_{12}^2} - \frac{y_{13}}{x_{13}^2} \right) + 16 \pi i \frac{y_{23}}{x_{23}^3} \frac{x_{12}^2 + x_{13}^2}{x_{12} x_{13}} \\
         & \quad - 8 \pi i \left( \frac{y_{12}}{x_{12}^2} + \frac{y_{13}}{x_{13}^2} \right) \left( \frac{1}{x_{12}} + \frac{1}{x_{13}} \right) + 4 \pi i \left( \frac{y_{12}}{x_{12}^3} + \frac{y_{13}}{x_{13}^3} \right) \\
         & \quad - \frac{y_{23}}{x_{23}^3} \left( 6 \alpha^2 \frac{y_{23}}{x_{23}} + 2 \right) \left( \frac{x_{23}^2}{x_{12} x_{13}} + 2 \right) \bigg] \<O_0(\Vec{x}_1) V_{\alpha}(\Vec{x}_2) V_{-\alpha}(\Vec{x}_3)\>^{(0)} \\
         & + 3 \alpha^4 \frac{y_{23}}{x_{23}} \left[ \frac{1}{x_{23}^2} \left( 3 \alpha^2 \frac{y_{23}}{x_{23}} + 2 + 12 \pi i \right) - \frac{1}{x_{12} x_{13}} \right] \<O_1(\Vec{x}_1) V_{\alpha}(\Vec{x}_2) V_{-\alpha}(\Vec{x}_3)\>^{(0)}
    \end{split}
\end{equation}

\subsection{Root-$T \overline{T}$ deformation} \label{sec:rootscalar}

The data of the seed theory is \eqref{eq:ScalarL}
\begin{equation}
    \mathcal{L}^{(0)} = (\p_y \phi)^2 = M.
\end{equation}
As discussed before, the first-order correction to the Lagrangian is proportion to $\mathcal{L}^{(0)}$
\begin{equation}
    \mathcal{L}^{(1)} = \mathcal{L}^{(0)} = M.
\end{equation}
Then the first-order correction to the stress tensor is
\begin{equation}
    T^{A (1)}_{\ B} = \frac{\p \mathcal{L}^{(1)}}{\p (\p_A \phi)} \p_B \phi - \delta^A_{\ B} \mathcal{L}^{(1)} = T^{A (0)}_{\ B}.
\end{equation}
From \eqref{eq:rootrecursion}, one can then derive the second-order correction to the Lagrangian as
\begin{equation}
    \begin{split}
        \mathcal{L}^{(2)} & = \frac{1}{2 \sqrt{M^2}} \sum_{i = 0}^1 \left( \frac{1}{2} T^{A (i)}_{\ B} T^{B (1-i)}_{\ A} - \frac{1}{4} T^{A (i)}_{\ A} T^{B (1 - i)}_{\ B} \right) \\
        & = \frac{1}{M} \left( \frac{1}{2} T^{A (0)}_{\ B} T^{B (0)}_{\ A} - \frac{1}{4} T^{A (0)}_{\ A} T^{B (0)}_{\ B} \right) = M = \mathcal{L}^{(0)}.
    \end{split}
\end{equation}
This gives us an insight that each order of the Lagrangian is proportional to $\mathcal{L}^{(0)}$, namely
\begin{equation} \label{eq:Ln=anL0}
    \mathcal{L}^{(n)} = a_n \mathcal{L}^{(0)}, \quad T^{A (n)}_{\ B} = a_n T^{A (0)}_{\ B}, \quad a_0 = 1.
\end{equation}
where $a_n$-s are real numbers. 

This can be proved by using induction on the correction order $n$. To get start, the \eqref{eq:Ln=anL0} is right for $n = 0,1,2$ with $a_0 = a_1 = a_2 = 1$. Then, suppose \eqref{eq:Ln=anL0} is true for $\forall n = 1, 2, \cdots, n_0$. So the relation \eqref{eq:rootrecursion} can be rewritten as
\begin{equation}
    \begin{split}
        \sum_{n = 0}^{\infty} \frac{\lambda^n}{n!} \mathcal{L}^{(n+1)} = & \Bigg\{ M^2 \left[ 1 + \sum_{n = 1}^{n_0} \frac{\lambda^n}{n!} \left(\sum_{i = 1}^n a_i a_{n - i} C_{n_0}^n \right) \right] \\
        & + \sum_{n = n_0 +1}^{\infty} \frac{\lambda^n}{n!} \left[ \sum_{i = 0}^n C_n^i \left( \frac{1}{2} T^{A (i)}_{\ B} T^{B (n-i)}_{\ A} - \frac{1}{4} T^{A (i)}_{\ A} T^{B (n - i)}_{\ B} \right) \right] \Bigg\}^{\frac{1}{2}},
    \end{split}
\end{equation}
where the second line of the above equation will contribute to the higher power of $\lambda$ after the Taylor expansion, while the first line will contribute to the $\lambda^{n_0}$ power, which indicates that $\mathcal{L}^{(n_0 + 1)}$ is also proportion to $\mathcal{L}^{(0)}$. Thus $T^{A (n_0 +1)}_{\ B}$ is also proportion to $T^{A (0)}_{\ B}$ with the same coefficient as $\mathcal{L}^{(n_0 + 1)}$. Therefore the eq \eqref{eq:rootrecursion} has been proven.

Then the deformed Lagrangian and stress tensor can be rewritten as the following factorized form
\begin{equation} \label{eq:factorize}
    \mathcal{L}^{[\lambda]} = f(\lambda) \mathcal{L}^{(0)}, \quad T^{A [\lambda]}_{\ B} = f(\lambda) T^{A (0)}_{\ B},
\end{equation}
where
\begin{equation}
    f (\lambda) = \sum_{n = 0}^{\infty} \frac{\lambda^n}{n!} a_n.
\end{equation}
Plugging this into the definition of the deformation, one obtains
\begin{equation}
    f'(\lambda) \mathcal{L}^{(0)} = \sqrt{\frac{1}{2} T^{A [\lambda]}_{\ B} T^{B [\lambda]}_{\ A} - \frac{1}{4} \left(T^{A [\lambda]}_{\ A} \right)^2} = f(\lambda) M = f(\lambda) \mathcal{L}^{(0)}.
\end{equation}
Therefore the constraint of $f(\lambda)$
\begin{equation}
    f'(\lambda) = f(\lambda)
\end{equation}
It is worth noting that when $\lambda = 0$, the Lagrangian will degenerate to the seed theory, which indicates that $f(0) = 1$. Thus, one can simply work out the solution of the above equation
\begin{equation}
    f (\lambda) = e^{\lambda}.
\end{equation}
Inserting this back to the factorized formula \eqref{eq:factorize}, one then obtains the deformed data
\begin{equation} \label{eq:L-Boson}
    \mathcal{L}^{[\lambda]} = e^{\lambda} \mathcal{L}^{(0)}, \quad T^{A [\lambda]}_{\ B} = e^{\lambda} T^{A (0)}_{\ B}.
\end{equation}
This indicates that all $a_n$-s are equal to 1. This is a trivial effect to the action since we can rescale it to remove the constant $e^{\lambda}$. Then the deformed correlators defined in \eqref{eq:path<Xn>} are not affected by the $\sqrt{T \overline{T}}$ deformation, namely
\begin{equation} \label{eq:rootcorre}
    \<X_n\>_{[\lambda]}^{\sqrt{T \overline{T}}} = \<X_n\>_{[0]}.
\end{equation}
Finally, we should remark that the result we derived here does not contradict the first-order correction computation because they are from different perspectives. On the one hand, the computation of the correlator \eqref{eq:rootcorre} is the non-perturbative version. It turns out that the $\sqrt{T \overline{T}}$ deformed BMS free scalar model is still the BMS invariant field theory, which is consistent with the property of a non-perturbative marginal deformation. Moreover, since the action is invariant under scaling transformation, the $\sqrt{T \overline{T}}$ deformed free scalar model is the same as its seed theory. On the other hand, the generic first-order correction of the $\sqrt{T \overline{T}}$ discussed in subsection \ref{sec:1strootTTbar} is computed from the perturbative method, which may break the BMS symmetry. So it is normal to use some extra terms. Specifically, while keeping the factor $e^{\lambda}$ in \eqref{eq:L-Boson} to perturbatively compute the first-order correction for the $\sqrt{T \overline{T}}$ deformed free scalar model, the result will be the same as the generic first-order correction of the $\sqrt{T \overline{T}}$ discussed in subsection \ref{sec:1strootTTbar}.

\section{Deforms for free Fermion model} \label{sec:deformFermion}

\subsection{Data of seed theory}

The action of the BMS free fermion model is constructed by the field $\psi_a = (\psi_1, \psi_2)$  as \cite{Hao:2022xhq, Banerjee:2022ocj}
\begin{equation} \label{eq:seedFermion}
    S^{(0)} = \int \td x \td y \mathcal{L}^{(0)}, \quad \mathcal{L}^{(0)} = \psi_1 \p_x \psi_1 - \frac{1}{2} \psi_2 \p_y \psi_1 - \frac{1}{2} \psi_1 \p_y \psi_2,
\end{equation}
with the following equation of motion (EoM)
\begin{equation} \label{eq:FermionEoM}
    \p_y \psi_1 = 0, \quad 2 \p_x \psi_1 = \p_y \psi_2.
\end{equation}
By using the definition \eqref{eq:recursion}, the prototype of stress tensor can be derived as
\begin{equation} \label{eq:prototype}
    \mathcal{T}^{\mu (0)}_{\ \nu} = \frac{\p \mathcal{L}^{(0)}}{\p (\p_{\mu} \psi_a)} \p_{\nu} \psi_a - \delta^{\mu}_{\ \nu} \mathcal{L}^{(0)} = 
    \begin{pmatrix}
        - \psi_1 \p_x \psi_1 & - \frac{1}{2} \psi_2 \p_x \psi_1 - \frac{1}{2} \psi_1 \p_x \psi_2 \\
        \psi_1 \p_y \psi_1 & \frac{1}{2} \psi_2 \p_y \psi_1 + \frac{1}{2} \psi_1 \p_y \psi_2
    \end{pmatrix}.
\end{equation}
This, however, is not in the same form as the standard expression of \eqref{eq:Tcompo}, which is derived from the invariance of the BMS transform. In addition, it is even not a conserved current since \eqref{eq:conservT} is seemingly not satisfied. Fortunately, the standard stress tensor can be derived by plugging the EoM \eqref{eq:FermionEoM} into its prototype \eqref{eq:prototype} as
\begin{equation} \label{eq:TplogEoM}
    T^{\mu (0)}_{\ \nu} = 
    \begin{pmatrix}
        M & T \\
        0 & -M
    \end{pmatrix},
    \quad M = - \psi_1 \p_x \psi_1, \quad T = - \frac{1}{2} \psi_2 \p_x \psi_1 - \frac{1}{2} \psi_1 \p_x \psi_2,
\end{equation}
which satisfies the conservation law \eqref{eq:conservT} and is consistent with the standard form of stress tensor \eqref{eq:Tcompo}. The enlarged symmetry of BMS free Fermion model is triggered by the dilation symmetry $\mathcal{D}'$ of the seed action \eqref{eq:seedFermion}, where \cite{Yu:2022bcp}
\begin{equation} \label{eq:dilation}
    \mathcal{D}': (x, y) \rightarrow (x, D y), \quad (\psi_1, \psi_2) \rightarrow (D^{- \frac{1}{2}} \psi_1, D^{\frac{1}{2}} \psi_2),
\end{equation}
whose Noether currents are
\begin{equation}
    J_{\mathcal{D}'}^{\mu (0)} = T^{\mu (0)}_{\ y} y - J^{\mu}_{(0)}, \quad J^{\mu}_{(0)} = \frac{\p \mathcal{L}^{(0)}}{\p (\p_{\mu} \psi_a)} \mathcal{F}_a, \quad \mathcal{F}_a = \frac{1}{2} (- \psi_1, \psi_2).
\end{equation}
The BMS symmetry is enlarged by a dimension 1 current $J^{\mu}_{(0)}$, whose components are
\begin{equation}
    J^y_{(0)} = J_x^{(0)} = - \frac{1}{2} :\psi_1 \psi_2 :, \quad J^x_{(0)} = - J_y^{(0)} = 0.
\end{equation}
The generators yielded from the stress tensor and $J^{\mu}_{(0)}$ form the BMS Kac-Moody algebra, which is a specific case of \eqref{eq:NLKM}, see the details in \cite{Hao:2022xhq,Yu:2022bcp}.
In BMS free Fermion model, there are 3 kinds of primary fields: identity operators (singlet) with $\Delta = \xi = 0$; Fermion field (multiplet) $\psi = (\psi_1, \psi_2)^T$ with conformal weight $\Delta = \frac{1}{2}$ and boost charge $\xi = 0$; composite operator $P = -2 J^y_{(0)} = : \psi_1 \psi_2 :$ (singlet) with $\Delta = 1, \xi = 0$. With the OPEs between the fields
\begin{equation} \label{eq:FpsiOPE}
    \begin{split}
        & \psi_1(x_1) \psi_1(x_2) \sim 0, \quad \psi_2(x_1,y_1) \psi_2(x_2,y_2) \sim - \frac{2y_{12}}{x_{12}^2}, \\
        & \psi_1(x_1) \psi_2(x_2,y_2) \sim \psi_2(x_1,y_1) \psi_1(x_2) \sim \frac{1}{x_{12}},
    \end{split}
\end{equation}
one can easily check that the OPEs between the currents and the primary operators as
\begin{equation} \label{eq:TpsiOPE}
    \begin{split}
        T(x',y') \psi_1 (x) \sim & \frac{\psi_1(x)}{2(x'-x)^2} + \frac{\p_x \psi_1 (x)}{x'-x}, \\
        T(x',y') \psi_2 (x,y) \sim & \frac{\psi_2 (x,y)}{2 (x'-x)^2} - \frac{2(y'-y)}{(x'-x)^3} \psi_1 (x') \\
        & + \frac{\p_x \psi_2(x,y)}{x'-x} - \frac{y'-y}{(x'-x)^2} \p_y \psi_2 (x,y), \\
        M(x') \psi_1 (x) \sim & 0, \quad M(x') \psi_2 (x,y) \sim \frac{\psi_1 (x)}{(x'-x)^2} + \frac{\p_y \psi_2 (x,y)}{x'-x};
    \end{split}
\end{equation}
\begin{equation} \label{eq:TPOPE}
    \begin{split}
        T(x',y') P(x,y) & \sim \frac{P(x,y)}{(x'-x)^2} + \frac{\p_x P(x,y)}{x'-x} - \frac{y'-y}{(x'-x)^2} \p_y P(x,y), \\
        M(x') P(x,y) & \sim \frac{\p_y P(x,y)}{x'-x};
    \end{split}
\end{equation}
\begin{equation}
    \begin{split}
        P (x_1, y_1) P (x_1, y_1) & \sim \frac{1}{x_{12}^2} - \frac{y_{12}}{x_{12}} \p_{y_2} P(x_2, y_2), \quad P (x_2, y_2) \psi_1 (x_2) \sim \frac{\psi_1 (x_1)}{x_{12}}, \\
        P(x_1, y_1) \psi_2 (x_2, y_2) & \sim - \frac{\psi_2 (x_2)}{x_{12}} + 2 \frac{y_{12}}{x_{12}^2} \psi_1 (x_2) + 2 \frac{y_{12}}{x_{12}} \p_{x_2} \psi_1 (x_2).
    \end{split}
\end{equation}
Comparing with \eqref{eq:WardM}, 
one can easily deduce that $\psi_i$ is a multiplet with weight $\Delta = \frac{1}{2}$ and vanished boost charge while $P$ is a singlet with conformal weight $\Delta = 1$ and vanished boost charge. Moreover, it is easy to verify that OPEs between the components of stress tensor also consist with \eqref{eq:TMOPE} with $c_L = 1$ and $c_M = 0$. 

Similarly, the correlators of the primaries in free Fermion can be derived from the OPEs between the fields
\begin{equation} \label{eq:3ptFermi}
    \begin{split}
        & \< \psi_1(x_1) \psi_2 (x_2, y_2) \>^{(0)} = \frac{1}{x_{12}}, \quad \< \psi_2 (x_1, y_1) \psi_2 (x_2, y_2) \>^{(0)} = - \frac{2 y_{12}}{x_{12}^2}, \\
        & \< \psi_1 (x_1) \psi_2 (x_2, y_2) P (x_3, y_3) \>^{(0)} = - \frac{1}{x_{23} x_{13}}.
    \end{split}
\end{equation}
Other combinations of two-point and three-point functions of primaries are all vanished.

\subsection{$T \overline{T}$ deformation} \label{sec:TTbarFermion}

As derived in \eqref{eq:recursion}, the recursion relation of the deformed Lagrangian is 
\begin{equation} \label{eq:FermionLT}
    \begin{split}
        \mathcal{L}^{(n+1)} = & \frac{1}{2} \sum^n_{i = 0} C^i_n \left( T^{\mu(i)}_{\ \mu} T^{\nu (n-i)}_{\ \nu} - T^{\mu (i)}_{\ \nu} T^{\nu (n-i)}_{\ \mu} \right), \\
        T^{\mu (n)}_{\ \nu} = & \frac{\p \mathcal{L}^{(n)}}{\p (\p_{\mu} \psi_a)} \p_{\nu} \psi_a - \delta^{\mu}_{\ \nu} \mathcal{L}^{(n)}, \quad n = 1, 2, \cdots.
    \end{split}
\end{equation}
With the explicit form of the stress tensor in seed theory $T^{\mu (0)}_{\ \nu}$ derived in \eqref{eq:TplogEoM}, the deformed Lagrangian can be computed order by order. It is easy to verify that the first-order correction of the Lagrangian is $MM$, which is consistent with the discussion in the section \ref{sec:1stTTbarJTaroot}. However, one can immediately obtain that $\mathcal{L}^{(0)}$ vanishes since $MM = (\psi_1 \p_x \psi_1)^2 = 0$. This is because the Grassmann numbers $\psi_1$ and $\p_x \psi_1$ should not appear twice. Therefore, the first-order correction to the components of the stress tensor all vanished
\begin{equation}
    T^{\mu (1)}_{\ \nu} = 0.
\end{equation}
indicating that the second-order correction to the Lagrangian and stress tensor vanish, which can be easily verified as
\begin{equation}
    \begin{split}
        \mathcal{L}^{(2)} & = \frac{1}{2} \left[ T^{\mu (0)}_{\ \mu} T^{\nu (1)}_{\ \nu} + T^{\mu (1)}_{\ \mu} T^{\nu (0)}_{\ \nu} - T^{\mu (0)}_{\ \nu} T^{\nu (1)}_{\ \mu} - T^{\mu (1)}_{\ \nu} T^{\nu (0)}_{\ \mu} \right] = 0 \\
        T^{\mu (2)}_{\ \nu} & = \frac{\p \mathcal{L}^{(2)}}{\p (\p_{\mu} \psi_a)} \p_{\nu} \psi_a - \delta^{\mu}_{\ \nu} \mathcal{L}^{(2)} = 0.
    \end{split}
\end{equation}
By induction on $n$, one can verify that all the correction terms vanish
\begin{equation}
    \mathcal{L}^{(n)} = 0, \quad T^{\mu (n)}_{\ \nu} = 0, \quad \forall n = 1, 2, \cdots,
\end{equation}
causing the Lagrangian and the correlators of BMS free Fermion model to be unchanged through the $T \Bar{T}$ flow, namely
\begin{equation} \label{eq:FermionLX}
    \mathcal{L}^{[\lambda]} = \mathcal{L}^{(0)}, \quad \<X_n\>^{[\lambda]}_{T \overline{T}} = \<X_n\>^{(0)}.
\end{equation}
Until now, we have not used the EoM for the corrected terms. Actually if one imposes the definition of $T^{\mu (n)}_{\ \nu}$ in \eqref{eq:FermionLT} for $T^{\mu (0)}_{\ \nu}$, namely substitute $\mathcal{T}^{\mu (0)}_{\ \nu}$ for $T^{\mu (0)}_{\ \nu}$ 
\begin{equation}
    \begin{split}
        \mathcal{L}^{(1)} 
        = & \frac{1}{2} \left[ \mathcal{T}^{x (0)}_{\ x} \mathcal{T}^{y (0)}_{\ y} + \mathcal{T}^{y (0)}_{\ y} \mathcal{T}^{x (0)}_{\ x} - \mathcal{T}^{x (0)}_{\ y} \mathcal{T}^{y (0)}_{\ x} - \mathcal{T}^{x (0)}_{\ y} \mathcal{T}^{y (0)}_{\ x} \right] \\
        = & \psi_1 \psi_2 \p_x \psi_1 \p_y \psi_1,
    \end{split}
\end{equation}
one can also derive that $T^{\mu (1)}_{\ \nu} = 0$ without the EoM of fields. Then \eqref{eq:FermionLX} can be re-derived by induction on $n$ of $\mathcal{L}^{(n)}$.

The $T \overline{T}$ deformation will change the theory in normal circumstances. However, the result here has shown that the $T \overline{T}$ deformed free Fermion model is a fixed point through the $T \overline{T}$ flow. This is not strange, since the structure of the Fermion model requires that correct terms with multi-$M$ are all zero. Moreover, the deformed data only contain $M$ and are independent of $T$. Therefore all corrections of $T \overline{T}$ are vanished.  

\subsection{$JT_{\mu}$ deformation}

The definition of $JT_{\mu}$ is \eqref{eq:JTL}. With the expansion similar to \eqref{eq:expandLTj}, one can similarly deduce the recursion relation of the deformed Lagrangian as
\begin{equation}
    \mathcal{L}_{\mu_1 \cdots \mu_n \mu_{n+1}} = \sum_{i = 0}^n C_n^i \left[ (J^y)_{\mu_{i+1} \cdots \mu_n} (T^x_{\ \mu_{n+1}})_{\mu_1 \cdots \mu_i} - (J^x)_{\mu_{i+1} \cdots \mu_n} (T^y_{\ \mu_{n+1}})_{\mu_1 \cdots \mu_i} \right],
\end{equation}
where
\begin{equation}
    \begin{split}
        (T^{\alpha}_{\ \mu})_{\mu_1 \cdots \mu_n} & = \frac{\p \mathcal{L}_{\mu_1 \cdots \mu_n}}{\p (\p_{\alpha} \psi_a)} \p_{\mu} \psi_a - \delta^{\alpha}_{\ \mu} \mathcal{L}_{\mu_1 \cdots \mu_n}, \\
        (J^{\beta})_{\mu_{i+1} \cdots \mu_n} & = \frac{1}{2} \frac{\p \mathcal{L}_{\mu_{i+1} \cdots \mu_n}}{\p (\p_{\beta} \psi_2)} \psi_2 - \frac{1}{2} \frac{\p \mathcal{L}_{\mu_{i+1} \cdots \mu_n}}{\p (\p_{\beta} \psi_1)} \psi_1.
    \end{split}
\end{equation}
Similarly, the BMS free Fermion model is also a fixed point for $JT_{\mu}$ flow. To see this, we just need to compute the first-order correction of the Lagrangian. Having discussed in subsection \ref{sec:TTbarFermion}, the on-shell condition should not be implemented while computing the corrections order by order, namely one should substitute the off-shell stress tensor of the seed theory \eqref{eq:prototype} for $T^{\mu (0)}_{\ \nu}$ here, instead of \eqref{eq:TplogEoM}. The first-order correction of the Lagrangian should be $\lambda^{\mu} \mathcal{L}_{\mu} = \lambda^y \mathcal{L}_y + \lambda^x \mathcal{L}_x$, where
\begin{equation}
    \mathcal{L}_y = J^y_{(0)} T^{x (0)}_{\ y} - J^x_{(0)} T^{y (0)}_{\ y} = \frac{1}{4} \psi_1 \psi_2 ( \psi_2 \p_x \psi_1 + \psi_1 \p_x \psi_2 )
\end{equation}
and
\begin{equation}
    \mathcal{L}_x = J^y_{(0)} T^{x (0)}_{\ x} - J^x_{(0)} T^{y (0)}_{\ x} = - \frac{1}{4} \psi_1 \psi_2 ( \psi_2 \p_y \psi_1 + \psi_1 \p_y \psi_2).
\end{equation}
Then one can easily deduce that the above two equations have all vanished since $\psi_1$ or $\psi_2$ appears twice on the right-hand side of both equations. Therefore even without implementing the on-shell condition, the first-order corrections still vanished
\begin{equation}
    \mathcal{L}_y = 0, \quad \mathcal{L}_x = 0.
\end{equation}
Then, by using mathematical induction, one can easily prove that the higher-order corrections have also vanished, namely
\begin{equation}
    \mathcal{L}_{\mu_1 \cdots \mu_n} \equiv 0, \quad \forall n \geq 1.
\end{equation}
This yields that the $JT_{\mu}$ deformed Lagrangian and deformed correlators remain unchanged
\begin{equation}
    \mathcal{L}^{[\lambda]} = \mathcal{L}^{(0)}, \quad \<X_n\>^{[\lambda]}_{JT_{\mu}} = \<X_n\>^{(0)}.
\end{equation}
which means that BMS free Fermion model is also a fixed point through $JT_{\mu}$ flow.

\subsection{Root-$T \overline{T}$ deformation}

\subsubsection{Deformed Lagrangian}

The recursion relation can be derived from \eqref{eq:rootrecursion}, with $T^{\mu (0)}_{\ \nu}$ in \eqref{eq:TplogEoM}
\begin{equation} \label{eq:rootFermi}
    \begin{split}
        \sum_{n = 0}^{\infty} \frac{\lambda^n}{n!} \mathcal{L}^{(n+1)} 
        & = \sqrt{M^2 + \sum_{n = 1}^{\infty} \frac{\lambda^n}{n!} \sum_{i = 0}^n C_n^i \left( \frac{1}{2} T^{A (i)}_{\ B} T^{B (n-i)}_{\ A} - \frac{1}{4} T^{A (i)}_{\ A} T^{B (n - i)}_{\ B} \right)} \\
        T^{\mu (n)}_{\ \nu} & = \frac{\p \mathcal{L}^{(n)}}{\p (\p_{\mu} \psi_a)} \p_{\nu} \psi_a - \delta^{\mu}_{\ \nu} \mathcal{L}^{(n)}, \quad n = 1, 2, \cdots.
    \end{split}
\end{equation}
Note that in the free Fermion case, $MM$ is zero while the $\sqrt{MM} = M$ is formally non-zero. So the first-order correction is
\begin{equation} \label{eq:FermirootL1}
    \mathcal{L}^{(1)} = M = - \psi_1 \p_x \psi_1,
\end{equation}
which consists of the generic discussion in section \ref{sec:1stTTbarJTaroot}. Then the first-order correction of the components of the stress tensor are 
\begin{equation} \label{eq:TcompoFermi}
    T^{y (1)}_{\ y} = - M, \quad T^{x (1)}_{\ y} = - \psi_1 \p_y \psi_1, \quad T^{y (1)}_{\ x} = T^{x (1)}_{\ x} = 0.
\end{equation}
So the second order correction is
\begin{equation} \label{eq:FermirootL2}
    \mathcal{L}^{(2)} = \frac{M^{-1}}{4} \left[ 2 T^{A (0)}_{\ B} T^{B (1)}_{\ A} - T^{A (0)}_{\ A} T^{B (1)}_{\ B} \right] = - \frac{M}{2}.
\end{equation}
One can prove that the corrected terms $\mathcal{L}^{(n)}$ with $n = 1, 2, \cdots$ are all proportional to $\mathcal{L}^{(1)}$ by induction on $n$ with the coefficient $b_n$ for $n$-th order. We have verified that this is true for $n = 1,2$. Suppose this proposition is true for all $n = 1, 2, \cdots, n_0$, namely
\begin{equation}
    \mathcal{L}^{(n)} = b_n M = b_n \mathcal{L}^{(1)}, \quad n = 1, 2, \cdots, n_0.
\end{equation}
So the $n$-th order correction of the stress tensor is proportional to $T^{\mu (1)}_{\ \nu}$
\begin{equation}
    T^{A (n)}_{\ B} = b_n T^{A (1)}_{\ B}, \quad n = 1, 2, \cdots, n_0.
\end{equation}
Then the formula \eqref{eq:rootFermi} can be divided into the following form
\begin{equation}
    \begin{split}
        \sum_{n = 0}^{\infty} \frac{\lambda^n}{n!} \mathcal{L}^{(n+1)} = & \Bigg\{M^2 + \sum_{n = 1}^{n_0} \frac{\lambda^n}{n!} \bigg[ b_n \left( T^{A (0)}_{\ B} T^{B (1)}_{\ A} - \frac{1}{2} T^{A (0)}_{\ A} T^{B (1)}_{\ B} \right) \\
        & + \left( \sum_{i = 1}^{n - 1} \frac{C_n^i}{2} b_i b_{n - i} \right) \left( T^{A (1)}_{\ B} T^{B (1)}_{\ A} - \frac{1}{2} T^{A (1)}_{\ A} T^{B (1)}_{\ B} \right) \bigg] \\
        & + \sum_{n = n_0 + 1}^{\infty} \frac{\lambda^n}{n!} \sum_{i = 0}^n C_n^i \left( \frac{1}{2} T^{A (i)}_{\ B} T^{B (n-i)}_{\ A} - \frac{1}{4} T^{A (i)}_{\ A} T^{B (n - i)}_{\ B} \right) \Bigg\}^{\frac{1}{2}},
    \end{split}
\end{equation}
which can be rewritten as the following form with the expression of $T^{\mu (1)}_{\ \nu}$ and $T^{\mu (0)}_{\ \nu}$ in \eqref{eq:TcompoFermi} and \eqref{eq:Tcompo}
\begin{equation}
    \begin{split}
        \sum_{n = 0}^{\infty} \frac{\lambda^n}{n!} \mathcal{L}^{(n+1)} = & \Bigg\{M^2 \left[ 1 + \sum_{n = 1}^{n_0} \frac{\lambda^n}{n!} \left( \sum_{i = 1}^{n - 1} \frac{C_n^i}{4} b_i b_{n - i} - b_n \right) \right] \\
        & + \sum_{n = n_0 + 1}^{\infty} \frac{\lambda^n}{n!} \sum_{i = 0}^n C_n^i \left( \frac{1}{2} T^{A (i)}_{\ B} T^{B (n-i)}_{\ A} - \frac{1}{4} T^{A (i)}_{\ A} T^{B (n - i)}_{\ B} \right) \Bigg\}^{\frac{1}{2}}.
    \end{split}
\end{equation}
After expanding the square root near $M$ by the power of $\lambda$ and reading off the coefficient in front of the $\lambda^{n_0}$ in the right-hand side, which comes from the first line of the above equation, one can deduce that the $n_0 + 1$-th order correction of the deformed Lagrangian at $n_0 + 1$-th order is also proportional to $M (= \mathcal{L}^{(1)})$. Therefore the all-order corrected Lagrangian and stress tensor can be expressed as
\begin{equation} \label{eq:LFrootTTbar}
    \mathcal{L}^{[\lambda]} = \mathcal{L}^{(0)} + g(\lambda) M, \quad T^{A [\lambda]}_{\ B} = T^{A (0)}_{\ B} + g(\lambda) T^{A (1)}_{\ B}.
\end{equation}
Plugging them into the definition of the root-$T \overline{T}$ deformation \eqref{eq:rootdefine} and \eqref{eq:R}, one can derive a constraint for the function $g(\lambda)$ as
\begin{equation}
    g'(\lambda) = \frac{1}{2} |g(\lambda) - 2|
\end{equation}
Similarly, the Lagrangian will degenerate to the seed theory $\mathcal{L}^{(0)}$, so $g(0) = 0$. Together with the constraints \eqref{eq:FermirootL1} and \eqref{eq:FermirootL2}, the function $g(\lambda)$ can be fixed as
\begin{equation} \label{eq:glambda}
    g(\lambda) = 2 - 2 e^{- \frac{\lambda}{2}}.
\end{equation}
Unfortunately, if we substitute $\mathcal{T}^{\mu (0)}_{\ \nu}$ for $T^{\mu (0)}_{\ \nu}$, as we did in subsection \ref{sec:TTbarFermion}, the off-shell terms within the square root will not form a perfect square. Consequently, the definition of the $\sqrt{T \overline{T}}$ deformation for the BMS fermion model becomes ill-defined, as the square root of certain Grassmann numbers lacks a clear definition. Before proceeding further, it is worth noting that the flow of the fields is not taken into account, which does not pose any issues within the perturbative approach.

\subsubsection{Deformed correlator}

The deformed correlators can be computed by path integral as
\begin{equation}
    \begin{split}
        \<X_n\>^{\sqrt{T \overline{T}}}_{[\lambda]} & = \left\< e^{- g (\lambda) \int \td x \td y M(x)} X_n \right\> \\
        & = \text{exp} \left[ 2 \pi i g (\lambda) \sum_k y_k \p_{y_k} \right] \<X_n\> \\
        & = \text{exp} \left[4 \pi i \left( 1- e^{- \frac{\lambda}{2}} \right) \sum_k y_k \p_{y_k} \right] \<X_n\>,
    \end{split}
\end{equation}
where we used the Ward identity \eqref{eq:WardM} in the second line, and the integral here is the same as we did in subsection \ref{sec:1strootTTbar}. Now the rest of this sub-subsection is to manifest the extra poles generated from the deformation by using the data of the seed theory derived in \eqref{eq:3ptFermi}, instead of leaving the derivative $\p_{y_k}$-s here. Apart from $\<\psi_2 \psi_2\>$ in \eqref{eq:3ptFermi}, others are all independent on $y_k$, indicating that
\begin{equation}
    \begin{split}
        & \< \psi_1(x_1) \psi_2 (x_2, y_2) \>_{[\lambda]}^{\sqrt{T \overline{T}}} = \< \psi_1(x_1) \psi_2 (x_2, y_2) \>^{(0)}, \\
        & \< \psi_1 (x_1) \psi_2 (x_2, y_2) P (x_3, y_3) \>_{[\lambda]}^{\sqrt{T \overline{T}}} = \< \psi_1 (x_1) \psi_2 (x_2, y_2) P (x_3, y_3) \>^{(0)}.
    \end{split}
\end{equation}
Since $\< \psi_2 (x_1, y_1) \psi_2 (x_2, y_2) \>^{(0)}$ is proportional to $y_{12}$, together with the fact that $\sum_k y_k \p_{y_k}$ is the identity operator of $y_{12}$, one can easily verify that
\begin{equation}
    \< \psi_2 (x_1, y_1) \psi_2 (x_2, y_2) \>_{[\lambda]}^{\sqrt{T \overline{T}}} = \text{exp} \left[4 \pi i \left( 1- e^{- \frac{\lambda}{2}} \right) \right] \< \psi_2 (x_1, y_1) \psi_2 (x_2, y_2) \>^{(0)}.
\end{equation}
Therefore, the only impact of the $\sqrt{T \overline{T}}$ deformation on $\< \psi_2 (x_1, y_1) \psi_2 (x_2, y_2) \>$ is a mere phase factor. This assures us that the BMS symmetries of the correlators in the BMS-free Fermion case remain intact despite the deformation. This phenomenon can be observed explicitly from the Lagrangian perspective by redefining the field $\psi_1$ as $\psi'_1 = (1 - g(\lambda)) \psi_1 = (2 e^{- \frac{\lambda}{2}} - 1) \psi_1$. In this redefinition, combined with \eqref{eq:FermirootL1}, \eqref{eq:glambda} and \eqref{eq:seedFermion}, the $\sqrt{T \overline{T}}$ deformed Lagrangian \eqref{eq:LFrootTTbar} can be expressed as
\begin{equation}
    \mathcal{L}^{[\lambda]} = \frac{1}{2 e^{- \frac{\lambda}{2}} - 1} \left( \psi'_1 \p_x \psi'_1 - \frac{1}{2} \psi_2 \p_y \psi'_1 - \frac{1}{2} \psi'_1 \p_y \psi_2 \right).
\end{equation}
This formulation is essentially a rescaling of the undeformed Lagrangian \eqref{eq:seedFermion} with field $(\psi'_1, \psi_2)$, which can be interpreted as a dilation transformation with $D = 2 e^{- \frac{\lambda}{2}} - 1$ as defined in \eqref{eq:dilation}. Consequently, the deformed action remains unaffected since the factor $\frac{1}{2 e^{- \frac{\lambda}{2}} - 1}$ can always be absorbed through coordinates rescaling, which is similar as the discussion in subsection \ref{sec:rootscalar}. This highlights that the BMS symmetries of the $\sqrt{T \overline{T}}$ deformed correlator remains preserved. This outcome is expected because the $\sqrt{T \overline{T}}$ deformation is a marginal deformation that preserves the original symmetries of the seed theory. As a result, the $\sqrt{T \overline{T}}$ deformed BMS-free Fermion model can be considered a well-defined marginal deformed theory that still qualifies as a BMSFT even after deformation.

In conclusion, as emphasized in subsection \ref{sec:rootscalar}, it is important to reiterate that the deformed correlators considered here encompass all order corrections and are cast into closed forms. This non-perturbative approach differs significantly from the perturbative method discussed in section \ref{sec:1stTTbarJTaroot}. However, it is crucial to note that this disparity does not imply a contradiction. Perturbative and non-perturbative methods operate at different levels, and upon closer examination, it becomes evident that the first-order corrections obtained from expanding the aforementioned results by power of $\lambda$ are identical to those derived from the perturbative method in section \ref{sec:1stTTbarJTaroot}. Notably, the non-perturbative approach offers greater precision and comprehensiveness compared to its perturbative counterpart. Consequently, the perturbative method runs the risk of compromising the symmetries inherent in the original theory.

\section{Conclusion}

In this paper, we introduce various types of irrelevant and marginal deformations in the BMSFT to evaluate the several types of action and lowest-order corrections to correlation functions. Firstly, we define these irrelevant and marginal deformations properly which is non-lorentize type of deformation. Based on the deformations, we apply the standard perturbative field theory approach to analyze the universal first-order corrections to the correlation functions of seed theories, which, based on our analysis, are only factorized for two-point and three-point functions consisting of singlet primary operators. In addition, we also investigate the flow effects of the deformations by calculating the higher order corrections for some specific case, e.g., free BMS Boson and Fermion theories, since the first-order corrections do not flow the seed theory while the higher order corrections depend on different seed theories. Particularly, we provide the all-order corrected Lagrangian for the deformations for these two cases, and compute the higher-order corrections of the deformed correlation functions systematically. As the classification of the RG, the irrelevant deformation might flow the seed theory to different theories while the well-defined marginal deformations will not. Specifically, the irrelevant $T \overline{T}$ and $JT_{\mu}$ deformations will indeed flow the local BMS free scalar theory to the non-local, and string-like theories, which can be observed from both classic levels, namely the \textit{all-order} corrected Lagrangian
\begin{equation}
    \mathcal{L}^{[\lambda] \text{scalar}}_{T \overline{T}} = \frac{\sqrt{4 \lambda (\p_y \phi)^2 +1}-1}{2 \lambda}, \quad \mathcal{L}^{[\lambda] \text{scalar}}_{JT_{\mu}} = \frac{1 - 2 \lambda \p_y \phi - \sqrt{1 - 4 \lambda \p_y \phi }}{2 \lambda ^2},
\end{equation}
and the quantum level, namely the higher-order corrections of the deformed correlators. However, the $T \overline{T}$ or $JT_{\mu}$ deformed free BMS Fermion model does not have any corrections for both classic (Lagrangian) and quantum (Correlator) level, because of the Grassmann structure of the Fermion. Besides, the $\sqrt{T \overline{T}}$ deformation for free BMS scalar and Fermion are well-defined and exact marginal deformation since they are essentially scale transforms for the Lagrangian
\begin{equation}
    \begin{split}
        \mathcal{L}^{[\lambda] \text{scalar}}_{\sqrt{T \overline{T}}} & = e^{\lambda} \mathcal{L}^{[0] \text{scalar}}_{\sqrt{T \overline{T}}}, \\
        \mathcal{L}^{[\lambda] \text{fermion}}_{\sqrt{T \overline{T}}} & = \mathcal{L}^{(0)} + 2 \left( 1 - e^{- \frac{\lambda}{2}} \right) M \\
        & = \frac{1}{2 e^{- \frac{\lambda}{2}} - 1} \mathcal{L}^{[0] \text{fermion}}_{\sqrt{T \overline{T}}} [\psi'_1, \psi_2],
    \end{split}
\end{equation}
which preserves the original BMS symmetries since the actions are scale invariant and the scale factors can be absorbed by coordinates rescaling.

\section*{Acknowledgments}

The authors thank Bin Chen, Feng Hao, Pujian Mao, Hao Ouyang, and Xiyang Ran, Hongan Zeng, Yu-Xuan Zhang, Zi-Xuan Zhao for their valuable discussions and comments. We are also grateful to the anonymous reviewer for pointing out that the Lagrangian of $\sqrt{T \overline{T}}$ deformed free Fermion can be absorbed by appropriate scalings of the coordinates, which resolves our confusion in the last version. This work is partly supported by the National Natural Science Foundation of China under Grant No.~12075101, No.~12235016. S.H. is grateful for financial support from the Fundamental Research Funds for the Central Universities and the Max Planck Partner Group.

\appendix

\section{Seed BMSFT from UR limit} \label{sec:URseed}

\subsection{Algebras from UR limit} \label{sec:URAlgebera}

In this subsection, we discuss the BMS algebra \eqref{eq:BMFTA} and NLKM algebra \eqref{eq:NLKM}. The BMS algebra can be derived from Virasoro algebra by implementing the UR limit. For generators in 2D CFT 
\begin{equation}
    \mathcal{L}_n = - z^{n+1} \p_z, \quad \mathcal{L}_n = - \Bar{z}^{n+1} \p_{\Bar{z}}, \quad (z, \Bar{z}) = (x+iy, x-iy)
\end{equation}
which satisfies the Virasoro algebra after the central extension
\begin{equation}
    \begin{split}
        [\mathcal{L}_n, \mathcal{L}_m] = & (n - m) \mathcal{L}_{n+m} + \frac{c}{12} n (n^2 - 1) \delta_{n+m,0}, \\
        [\Bar{\mathcal{L}}_n, \Bar{\mathcal{L}}_m] = & (n - m) \Bar{\mathcal{L}}_{n+m} + \frac{\Bar{c}}{12} n (n^2 - 1) \delta_{n+m,0}, \\
        [\mathcal{L}_n, \Bar{\mathcal{L}}_m] = & 0.
    \end{split}
\end{equation}
By choosing the following UR limit
\begin{equation} \label{eq:URxy}
    y \rightarrow \epsilon y, \quad x \rightarrow x, \quad \epsilon \rightarrow 0,
\end{equation}
together with
\begin{equation} \label{eq:URVirasoro}
    L_n = \lim_{\epsilon \rightarrow 0} (\mathcal{L}_n - \Bar{\mathcal{L}}_{-n}), \quad M_n = \lim_{\epsilon \rightarrow 0} \epsilon (\mathcal{L}_n + \Bar{\mathcal{L}}_{-n}).
\end{equation}
\begin{equation} \label{eq:URc}
    c_L = \lim_{\epsilon \rightarrow 0} (c - \Bar{c}), \quad c_M = \lim_{\epsilon \rightarrow 0} \epsilon (c + \Bar{c}),
\end{equation}
the Virasoro algebra will then recover \eqref{eq:BMFTA}.

Similarly, the NLKM can also be derived from the UR limit of the Virasoro Kac-Moody algebra. The holomorphic part of the Virasoro Kac-Moody algebra is
\begin{equation}
    \begin{split}
        [\mathcal{L}_n, \mathcal{L}_m] & = (m - n) \mathcal{L}_{m+n} + \frac{c}{12} (m^3 - m) \delta_{m+n, 0}, \quad [\mathcal{L}_m, j^a_n] = - n j^a_{m+n}, \\
        [j^a_m, j^b_n] & = i f^{abc} j^c_{m+n} + m k \delta_{m+n, 0} \delta^{ab}.
    \end{split}
\end{equation}
where the anti-holomorphic part is similar. By choosing the UR limit as \eqref{eq:URxy}, \eqref{eq:URVirasoro}, and \eqref{eq:URc}, together with
\begin{equation}
    J_m^a = j^a_m + \Bar{j}^a_{-m}, \quad K_m^a = \epsilon (j^a_m - \Bar{j}^a_{-m})
\end{equation}
and
\begin{equation}
    F^{abc} = \frac{1}{2} (f^{abc} + \Bar{f}^{abc}), \quad G^{abc} = \frac{1}{2 \epsilon} (f^{abc} + \Bar{f}^{abc}),
\end{equation}
\begin{equation}
    k_1 = k - \Bar{k}, \quad k_2 = \epsilon (k + \Bar{k}),
\end{equation}
the NLKM \eqref{eq:NLKM} will be explicitly re-derived.

\subsection{OPEs from UR limit} \label{sec:UROPE}

In this subsection, we discuss the Ward identities of the stress tensor in BMSFT and its OPE with the primary operators. We also need to give singlet and the multiplet result.

\paragraph{Singlets} The OPE between 2 operators is related to their commutators from the radial quantization
\begin{equation} \label{eq:[A,B]}
    \begin{split}
        & A = \oint a(z) dz, \quad B = \oint b(w) dw \\
        & [A,B] = \oint_0 dw \oint_w dz a(z) b(w), \quad [A, b(w)] = \oint_w dz a(z) b(w).
    \end{split}
\end{equation}
The OPEs between the components of the stress tensor can be derived from \eqref{eq:BMFTA} as
\begin{equation}
    \begin{split}
        L(x') L(x) \sim & \frac{c_L}{2(x'-x)^4} + \frac{2 L(x)}{(x'-x)^2} + \frac{\p_x L(x)}{x'-x}, \quad M(x') M(x) \sim 0,\\
        L(x') M(x) \sim & M(x') L(x) \sim \frac{c_M}{2(x'-x)^4} + \frac{2 M(x)}{(x'-x)^2} + \frac{\p_x M(x)}{x'-x}.
    \end{split}
\end{equation}
The OPE between the singlet primary operators $\mathcal{O}$ and the stress tensor can be derived from the UR limit. In CFT, we have
\begin{equation}
    [\mathcal{L}_n, \mathcal{O}] = \Big((n+1) h z^n + z^{n+1} \p_z \Big) \mathcal{O}, \quad [\bar{\mathcal{L}}_n, \mathcal{O}] = \Big((n+1) \Bar{h} \bar{z}^n + \bar{z}^{n+1} \p_{\bar{z}} \Big) \mathcal{O}.
\end{equation}
By using UR limit \eqref{eq:URVirasoro}, we obtain
\begin{align}
    [L_n, \mathcal{O}(x,y)] = & \Big[x^{n+1} \p_x + (n+1) x^n y \p_y + (n+1) x^n \Delta + n(n+1) x^{n-1} y \xi \Big] \mathcal{O}(x,y), \notag\\
    [M_n, \mathcal{O}(x,y)] = & \Big[x^{n+1} \p_y + (n+1) x^n \xi \Big] \mathcal{O}(x,y), \quad n \geq -1. \label{eq:primarytrans}
\end{align}
Then the OPE between the components of stress tensor and $\mathcal{O}$ can be easily derived from \eqref{eq:[A,B]} as
\begin{align} 
    T(x,y) \mathcal{O}_k (x_k,y_k) \sim & \frac{\Delta_k \mathcal{O}_k}{(x-x_k)^2} - \frac{2(y - y_k) \xi_k \mathcal{O}_k}{(x-x_k)^3} + \frac{\p_{x_k} \mathcal{O}_k}{x-x_k} - \frac{(y-y_k) \p_{y_k} \mathcal{O}_k}{(x-x_k)^2} \notag \\
    M(x) \mathcal{O}_k (x_k,y_k) \sim & \frac{\xi_k \mathcal{O}_k}{(x-x_k)^2} + \frac{\p_{y_k} \mathcal{O}_k}{x-x_k}. \label{eq:OPEsingle}
\end{align}
We can only roughly see the pole structure by using the UR limit method. More precisely, the pole structure can be observed by using the standard path integral method, which is equivalent to substitute $\Delta \Tilde{x}_k = x - x_k - i \varepsilon (y - y_k)$ for $x - x_k, \ 0 < \varepsilon \ll 1$, see details in \cite{Saha:2022gjw}.

\paragraph{Multiplets} Similarly, for multiplets, we have
\begin{equation} \label{eq:OPEMultip}
    \begin{split}
        T(x,y) O_{ia}(x_i,y_i) \sim & \frac{\Delta O_{ia}}{(\Delta \Tilde{x}_i)^2} - \frac{2(y - y_i) (\boldsymbol{\xi}_i \cdot O_i)_a}{(\Delta \Tilde{x}_i)^3} + \frac{\p_{x_i} O_{ia}}{\Delta \Tilde{x}_i} - \frac{(y-y_i) \p_{y_i} O_{ia}}{(\Delta \Tilde{x}_i)^2} \\
        M(x) O_{ia}(x_i,y_i) \sim & \frac{(\boldsymbol{\xi}_i \cdot O_i)_a}{(\Delta \Tilde{x}_i)^2} + \frac{\p_{y_i} O_{ia}}{\Delta \Tilde{x}_i}.
    \end{split}
\end{equation}

\section{Deformations from UR limit} \label{sec:DeformUR}

In this appendix, we give some insights that the deformations for BMSFT can also be derived from UR limit of the deformations for CFT. Without losing generality and keeping simplicity, we will mainly focus on the discussion of $T \overline{T}$ deformation. Suppose each order of $T \overline{T}$ deformed relativistic CFTs are consisted of stress tensors $T_{zz}^{(0)}, T_{\Bar{z} \Bar{z}}^{(0)}$ in the seed theories. The UR limit will link the coordinate in relativistic CFT $(z, \Bar{z})$ and that in BMSFT $(x,y)$ as
\begin{equation} \label{eq:xyzzbar}
    z = x + \epsilon y + O(\epsilon^2), \quad \Bar{z} = x - \epsilon y + O(\epsilon^2).
\end{equation}
The $\epsilon$ here is equivalent to $i \varepsilon$ in footnote \ref{fn:UR}. The relation between volume element is
\begin{equation}
    - 2 \epsilon \td x \td y = \td^2 z.
\end{equation}
Note that $T_{zz}^{(0)}$ and $T_{\Bar{z} \Bar{z}}^{(0)}$ can be expanded by Virasoro generators as
\begin{equation}
    T_{zz}^{(0)}(z) = \sum_{n \in \mathbb{Z}} z^{-n-2} \mathcal{L}_n, \quad T_{\Bar{z} \Bar{z}}^{(0)}(\Bar{z}) = \sum_{n \in \mathbb{Z}} \Bar{z}^{-n-2} \bar{\mathcal{L}}_n.
\end{equation}
Then, by using \eqref{eq:URVirasoro}, \eqref{eq:xyzzbar} and \eqref{eq:modexpanLM}, one obtains
\begin{equation}
    \begin{split}
        & T_{zz}^{(0)} + T_{\Bar{z} \Bar{z}}^{(0)} = \sum_n (z^{-n-2} \mathcal{L}_n + \Bar{z}^{-n-2} \bar{\mathcal{L}}_n ) \\
        = & \sum_n x^{-n-2} \left[\left( 1 - (n+2) \frac{y}{x} \epsilon \right) \mathcal{L}_n + \left( 1 + (n+2) \frac{y}{x} \epsilon \right) \bar{\mathcal{L}}_n \right] + \mathcal{O} (\epsilon^2) \\
        = & L(x) + y \p_x M(x) + \mathcal{O}(\epsilon^2) = T(x,y).
    \end{split}
\end{equation}
Similarly,
\begin{equation}
    \begin{split}
        & T_{zz}^{(0)} - T_{\Bar{z}^{(0)} \Bar{z}} = \sum_n (z^{-n-2} \mathcal{L}_n - \bar{z}^{-n-2} \bar{\mathcal{L}}_n ) \\
        = & \sum_n x^{-n-2} \left[\left( 1 - (n+2) \frac{y}{x} \epsilon \right) \mathcal{L}_n - \left( 1 + (n+2) \frac{y}{x} \epsilon \right) \bar{\mathcal{L}}_n \right] + \mathcal{O}(\epsilon^2) = \frac{1}{\epsilon} M(x).
    \end{split}
\end{equation}
Therefore, the UR limit yields
\begin{equation} \label{eq:T->M}
    \epsilon T_{zz}^{(0)} = - \epsilon T_{\Bar{z} \Bar{z}}^{(0)} = \frac{M}{2}.
\end{equation}
So we just need to substitute $M$ for $T_{zz}^{(0)}$ and $T_{\Bar{z} \Bar{z}}^{(0)}$ for BMSFT, which indicates that the correction terms for $T \overline{T}$ deformed BMSFT consist entirely of $M$ raised to different powers. This is consistent with our proposal in the main text. 

Specifically, the $T \overline{T}$ deformed scalar model for BMSFT and relativistic CFT are exactly associated with each other by UR limit. In the classic level, the action of the un-deformed relativistic free scalar model is
\begin{equation}
    S^{(0)}_{\text{relativistic}} = - \frac{1}{2} \int \td^2 z \p_z \phi_c \p_{\Bar{z}} \phi_c.
\end{equation}
The corresponding stress tensor in the relativistic seed theory is
\begin{equation}
    T^{(0)}_{zz} = (\p_z \phi_c)^2, \quad T^{(0)}_{\Bar{z} \Bar{z}} = (\p_{\Bar{z}} \phi_c)^2, \quad T^{(0)}_{z \Bar{z}} = 0.
\end{equation}
Together with the definition of $M$ in \eqref{eq:T0BMSFT} and the relation \eqref{eq:T->M}, one obtains the relation between the scalar fields in relativistic CFT and BMSFT
\begin{equation}
    \phi_c = \epsilon^{\frac{1}{2}} \phi,
\end{equation}
which is the same as the rescaling in \cite{Banerjee:2022ime, Hao:2021urq}. The deformed action in relativistic CFT is \cite{Cavaglia:2016oda}
\begin{equation}
    S^{[\lambda']}_{\text{relativistic}} = - \frac{1}{2} \int \td^2 z \mathcal{L}^{[\lambda']}_{\text{relativistic}}, \quad \mathcal{L}^{[\lambda']}_{\text{relativistic}} = \frac{\sqrt{4 \lambda' \p_z \phi_c \p_{\Bar{z}} \phi_c + 1} - 1}{2 \lambda'}
\end{equation}
Then, by taking UR limit, the action becomes
\begin{equation}
    S^{[\lambda']}_{\text{relativistic}} \Big|_{\text{UR}} \rightarrow \int \td x \td y \frac{\sqrt{4 \frac{\lambda}{\epsilon} (\p_y \phi_c)^2 + 1} - 1}{2 \lambda} = \int \td x \td y \frac{\sqrt{4 \lambda (\p_y \phi)^2 + 1} - 1}{2 \lambda},
\end{equation}
where $\lambda = \lambda'/ \epsilon$ is the coupling constant of deformed BMSFT. Therefore, at the classic level, the deformed free scalar model of BMSFT can also be derived from that of relativistic CFT by taking the UR limit. In the quantum level, the $T \overline{T}$ deformed correlators have been discussed in \cite{He:2019vzf, Cardy:2019qao} if the seed theory is relativistic CFT \footnote{Or, one can use the deformed partition function of relativistic free Boson in \cite{He:2020cxp} to derive the deformed correlators of relativistic CFT.}. One can easily check that the quantum corrections of the correlators of the relativistic free scalar correlators will fall off to the BMS free scalar derived in \eqref{eq:XnTTbar12}.

\section{Integral scheme} \label{sec:Integral}

This appendix develops a scheme to work out the integral proposed in \eqref{eq:Integral}. Through the analytical continuation, the integral of $x$ for $a_k >0$ can be extended to a contour integral surrounding the upper half plane with the anticlockwise direction. Notice that the contour integral is trivial while considering all of the poles simultaneously
\begin{equation} \label{eq:oint = 0}
    \oint_{x_1 \sim x_n} \frac{\td x}{\prod_{k = 1}^n (\widetilde{x} - \widetilde{x}_k)^{a_k}} = 0,
\end{equation}
which means that the poles should be placed in different half-plane, or the result will be zero, where the subscript $x_1 \sim x_n$ under the “$\oint$” denotes the residue needs to be computed. Moreover, the extra term $i \epsilon (y - y_k)$ inside the pole of $\widetilde{x}$ requires that the integral of $y$ should be divided by $(y_k, y_{k+1})$ with $k = 1, \cdots, n-1$, because, for $y>y_k$, the pole $x_k$ is in the upper half plane, which is inside of the contour of $x$, while for $y<y_k$, the pole $x_k$ is in the lower half-plane, which is outside of the contour of $x$. After the range of $y$ has already been divided, the extra term $i \epsilon (y - y_k)$ can be removed safely by $\epsilon \rightarrow 0$ since it will no longer contribute to the integral of $x$. Therefore the area integral can be rewritten as
\begin{equation} \label{eq:intdydx}
    \begin{split}
        \mathcal{I}^f_{a_1 \cdots a_n} & = \int^{\infty}_{- \infty} \td y f(y - y_i) \int^{\infty}_{- \infty} \frac{\td x}{\prod_{k = 1}^n (\Delta \Tilde{x}_k)^{a_k}} \bigg|_{\epsilon \rightarrow 0} \\
        & = \sum_{j = 1}^{n-1} \int^{y_{j+1}}_{y_j} \td y f(y - y_i) \oint_{x_{j+1} \sim x_n} \frac{\td x}{\prod_{k = 1}^n (x - x_k)^{a_k}} \\
        & = - \sum_{j = 1}^{n-1} \int^{y_{j+1}}_{y_j} \td y f(y - y_i) \oint_{x_1 \sim x_j} \frac{\td x}{\prod_{k = 1}^n (x - x_k)^{a_k}}.
    \end{split}
\end{equation}
This can be rewritten more beautifully, see the following Lemma and its corollary
\begin{lemma}
    Suppose $y_1 < y_2 < \cdots < y_n$, and $(\cdot)$ denotes the pole structure without $i \epsilon (y - y_k)$ for simplicity. Then \eqref{eq:intdydx} can be rewritten as
    \begin{equation} \label{eq:attachn}
        \int \td y \td x (\cdot) = - \sum_{j = 1}^{n-1} \int^{y_n}_{y_j} \td y \oint_{x_j} \td x (\cdot).
    \end{equation}
\end{lemma}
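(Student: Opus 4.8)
The plan is to prove the Lemma by a straightforward reorganization of the strip-by-strip sum in \eqref{eq:intdydx}, regrouping the contributions according to which single pole $x_j$ is being encircled rather than according to which $y$-strip is being integrated over. The only analytic inputs are that the regulated integrand factorizes as $f(y-y_i)$ times a purely $x$-dependent rational function, and that once the $i\epsilon(y-y_k)$ regulator has sorted the poles into the two half-planes, each single-pole residue
\begin{equation}
    \rho_m \equiv \oint_{x_m} \frac{\td x}{\prod_{k=1}^n (x-x_k)^{a_k}}
\end{equation}
is a constant independent of $y$. This lets me pull each $\rho_m$ out of every $y$-integral.

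First I would start from the third line of \eqref{eq:intdydx} and expand the multi-pole residue as a sum of single-pole residues, $\oint_{x_1 \sim x_j}(\cdot) = f(y-y_i)\sum_{m=1}^{j}\rho_m$, so that
\begin{equation}
    \mathcal{I}^f_{a_1 \cdots a_n} = -\sum_{j=1}^{n-1}\left(\sum_{m=1}^{j}\rho_m\right)\int_{y_j}^{y_{j+1}} f(y-y_i)\,\td y .
\end{equation}
This is a double sum over the index set $\{(m,j): 1\le m\le j\le n-1\}$, with the factor $\rho_m$ tied to the inner index and the $y$-strip $(y_j,y_{j+1})$ tied to the outer index.

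Next I would exchange the order of summation. Summing over $m$ first (from $1$ to $n-1$) and then over $j$ from $m$ to $n-1$, and using additivity of the $y$-integral to telescope the consecutive strips,
\begin{equation}
    \sum_{j=m}^{n-1}\int_{y_j}^{y_{j+1}} f(y-y_i)\,\td y = \int_{y_m}^{y_n} f(y-y_i)\,\td y ,
\end{equation}
which is where the hypothesis $y_1<\cdots<y_n$ is used to guarantee that the strips tile $(y_m,y_n)$ without overlap. Reinserting $f(y-y_i)$ into the contour integral $(\cdot)$ and relabeling $m\to j$ then gives exactly \eqref{eq:attachn}.

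The computation is essentially combinatorial, so there is no hard analytic obstacle; the only point demanding care is the index bookkeeping in the summation exchange, namely verifying that each residue $\rho_m$, which enters on every strip with outer index $j\ge m$, collects precisely the consolidated $y$-range $(y_m,y_n)$. A quick sanity check on $n=3$ (where the starting form $-\rho_1\!\int_{y_1}^{y_2}\!f - (\rho_1+\rho_2)\!\int_{y_2}^{y_3}\!f$ collapses to $-\rho_1\!\int_{y_1}^{y_3}\!f - \rho_2\!\int_{y_2}^{y_3}\!f$) confirms the rearrangement before committing to general $n$.
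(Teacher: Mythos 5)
Your proof is correct and is essentially the paper's own argument: the paper disposes of the lemma in one line ("combine the contour integrals with the same pole $x_j$ together"), and your summation exchange $\sum_{j=1}^{n-1}\sum_{m=1}^{j} = \sum_{m=1}^{n-1}\sum_{j=m}^{n-1}$ followed by telescoping the consecutive $y$-strips is exactly that regrouping, written out with the index bookkeeping made explicit. Nothing is missing; the observation that each single-pole residue $\rho_m$ is $y$-independent once the strip decomposition has fixed the pole assignment is precisely the point that licenses the rearrangement.
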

\begin{proof}
    This can be easily proved by combining the contour integrals with the same pole $x_j$ together.
\end{proof}

\begin{corollary}
    By using \eqref{eq:oint = 0}, we can easily prove that for any $k \in \{1, \cdots, n\}$, \eqref{eq:attachn} can be rewritten as
    \begin{equation} \label{eq:attachk}
        \int \td x \td y (\cdot) = - \sum_{j \neq k} \int^{y_k}_{y_j} \td y \oint_{x_j} \td x (\cdot),
    \end{equation}
\end{corollary}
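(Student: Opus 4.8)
The plan is to deduce the corollary from the Lemma by exploiting the freedom to change the ``base height'' in the $y$-integration, and the only extra ingredient required will be the vanishing \eqref{eq:oint = 0} of the full contour integral enclosing every pole simultaneously. Observe first that the Lemma is already \eqref{eq:attachk} for the special choice $k = n$: since $\int_{y_n}^{y_n} \td y = 0$, the missing $j = n$ term in \eqref{eq:attachn} contributes nothing, so \eqref{eq:attachn} may be written as a sum over $j \neq n$. It therefore suffices to show that the right-hand side of \eqref{eq:attachn} is independent of which pole $x_k$ is used as the reference height.

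First I would split each $y$-integral at the target height $y_k$, writing $\int_{y_j}^{y_n} \td y = \int_{y_j}^{y_k} \td y + \int_{y_k}^{y_n} \td y$ for every $j$. Substituting this into \eqref{eq:attachn} separates the expression into two pieces: one is exactly $-\sum_j \int_{y_j}^{y_k} \td y \oint_{x_j} \td x\,(\cdot)$, which is the desired right-hand side of \eqref{eq:attachk}; the other is the correction term $-\int_{y_k}^{y_n} \td y \sum_j \oint_{x_j} \td x\,(\cdot)$, in which the $y$-integral can be pulled out front because its limits no longer depend on $j$.

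The key step is then to recognize that the sum of single-pole contours reassembles into the total contour, $\sum_j \oint_{x_j} \td x\,(\cdot) = \oint_{x_1 \sim x_n} \td x\,(\cdot)$, which vanishes identically by \eqref{eq:oint = 0}. Hence the correction term is zero and \eqref{eq:attachn} collapses to $-\sum_j \int_{y_j}^{y_k} \td y \oint_{x_j} \td x\,(\cdot)$; finally the $j = k$ term drops out because $\int_{y_k}^{y_k} \td y = 0$, leaving the sum over $j \neq k$ asserted in \eqref{eq:attachk}. There is no serious obstacle here — the whole argument is essentially a one-line rearrangement once \eqref{eq:oint = 0} is invoked — so the only point demanding care is confirming that the residue-summation identity $\sum_j \oint_{x_j} = \oint_{x_1 \sim x_n}$ is legitimate, i.e. that no pole is double-counted and that the decay of the integrand underlying \eqref{eq:oint = 0} genuinely justifies closing the contour at infinity.
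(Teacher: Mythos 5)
Your proof is correct, but it takes a genuinely different route from the paper's. The paper does not use the Lemma as a black box: it goes back to the strip decomposition \eqref{eq:intdydx}, splits the sum over strips into $j < k$ and $j \geq k$, uses \eqref{eq:oint = 0} to flip the contours of the upper strips (replacing $-\oint_{x_1 \sim x_j}$ by $\oint_{x_{j+1} \sim x_n}$), and then repeats the Lemma's telescoping argument separately on each piece before reindexing. You instead take \eqref{eq:attachn} as given, split each $y$-integral at the reference height $y_k$, and observe that the leftover piece is a single $y$-integral multiplying $\sum_j \oint_{x_j} \td x\,(\cdot) = \oint_{x_1 \sim x_n} \td x\,(\cdot) = 0$. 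This is shorter and makes transparent \emph{why} the reference point is arbitrary: the difference between two choices of $k$ is proportional to the total residue, which vanishes. One bookkeeping point deserves emphasis: the identity $\sum_j \oint_{x_j} = \oint_{x_1 \sim x_n}$ requires the sum to run over \emph{all} $n$ poles, whereas the sum in \eqref{eq:attachn} stops at $j = n-1$; your opening remark that the $j = n$ term can be appended for free (since $\int_{y_n}^{y_n} \td y = 0$) is exactly what licenses this, so it should be invoked explicitly before the splitting step. (Equivalently, without appending it, \eqref{eq:oint = 0} gives $\sum_{j=1}^{n-1} \oint_{x_j} = -\oint_{x_n}$, and the correction term then reproduces precisely the $j = n$ term of \eqref{eq:attachk}, so the argument still closes.) Either way, both proofs rest on the same analytic input \eqref{eq:oint = 0}; yours simply exploits the Lemma's statement rather than re-deriving it.
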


\begin{proof}
    We divide \eqref{eq:intdydx} into $j \geq k$ and $j < k$
    \begin{align*}
        \int dx dy (\cdot) & = - \left(\sum_{j = k}^{n-1} + \sum_{j = 1}^{k-1} \right) \int^{y_{j+1}}_{y_j} dy \oint_{x_1 \sim x_j} dx (\cdot) \\
        & = \sum_{j = k}^{n-1} \int^{y_{j+1}}_{y_j} dy \oint_{x_{j+1} \sim x_n} dx (\cdot) - \sum_{j = 1}^{k-1} \int^{y_k}_{y_j} dy \oint_{x_j} dx (\cdot) \\
        & = \sum_{j = k}^{n-1} \int^{y_{j+1}}_{y_k} dy \oint_{x_{j+1}} dx (\cdot) - \sum_{j = 1}^{k-1} \int^{y_k}_{y_j} dy \oint_{x_j} dx (\cdot) \\
        & = \sum_{j = k+1}^{n} \int^{y_j}_{y_k} dy \oint_{x_j} dx (\cdot) - \sum_{j = 1}^{k-1} \int^{y_k}_{y_j} dy \oint_{x_j} dx (\cdot) \\
        & = - \sum_{j = k+1}^{n} \int_{y_j}^{y_k} dy \oint_{x_j} dx (\cdot) - \sum_{j = 1}^{k-1} \int^{y_k}_{y_j} dy \oint_{x_j} dx (\cdot),
    \end{align*}
    where we used \eqref{eq:oint = 0} in the second step.
\end{proof}
\noindent Since $y_k$ can be chosen as any operator there is no need to constrain $y_1 < \cdots < y_n$ while using \eqref{eq:attachk}. Therefore \eqref{eq:intdydx} is
\begin{equation}
    \mathcal{I}_{a_1 \cdots a_n}^f = \sum_{j = 1}^n \int^{y_k}_{y_j} \td y f(y-y_i) \oint_{x_j} \frac{\td x}{\prod^n_{i=1} (x-x_i)^{a_i}}
\end{equation}

Then the integrals can be easily computed. In $T \overline{T}, \sqrt{T \overline{T}}$ case, one will meet $f = 1$ cases, namely to compute
\begin{equation}
    \mathcal{I}_{a_1 \cdots a_n} = \frac{1}{(a_1 -1) !} \cdots \frac{1}{(a_n - 1) !} \p_{x_1} \cdots \p_{x_n} \mathcal{I}_{x_1 \cdots x_n}.
\end{equation}
The new quantity $\mathcal{I}^f_{x_1 \cdots x_n}$ is introduced for simplicity
\begin{equation}
    \mathcal{I}_{x_1 \cdots x_n} = - \sum_{j = 1}^n y_{ij} \oint_{x_i} \frac{\td x }{\prod_{k = 1}^n (x - x_k)},
\end{equation}
where we choose the reference point as $y_k = y_i$. This kind of integrals can be computed by using residue theorem as
\begin{equation}
    \mathcal{I}_{x_1 \cdots x_n} = 2 \pi i \frac{y_{ji}}{\prod_{k (\neq i)} x_{ik}}
\end{equation}
Here we present the integrals that appeared in the main text 
\begin{equation}
    \begin{split}
        \mathcal{I}_{4040} & = \left( y_{14} \oint_{x_1} + y_{24} \oint_{x_2} + y_{34} \oint_{x_3} \right) \frac{dx}{(x-x_1)^4 (x-x_3)^4} \\
        & = \frac{2 \pi i}{6} \left( y_{14} \p_{x_1}^3 + y_{34} \p_{x_3}^3 \right) \frac{1}{x_{13}^4} = \frac{2 \pi i}{6} y_{13} \p_{x_1}^3 \frac{1}{x_{13}^4} = - 40 \pi i \frac{y_{13}}{x_{13}^7},
    \end{split}
\end{equation}
\begin{equation}
    \mathcal{I}_{1313} = 2 \pi i \left[ \frac{y_{14}}{x_{12}^3 x_{13} x_{14}^3} + \frac{y_{34}}{x_{32}^3 x_{31} x_{34}^3} - y_{24} \frac{ 6 x_{12}^2 x_{23}^2 + x_{24}^2 x_{13}^2 + 3 x_{21} x_{23} x_{42} ( x_{12} + x_{32} + x_{42} ) }{ x_{12}^3 x_{23}^3 x_{24}^5 } \right],
\end{equation}
\begin{equation}
    \begin{split}
        \mathcal{I}_{2222} = & y_{14} \oint_{x_1} \frac{dx}{\prod^4_{i = 1} (x-x_i)^2} + (1 \leftrightarrow 2) + (1 \leftrightarrow 3) = \p_{x_1} \frac{2 \pi i y_{14}}{x_{12}^2 x_{13}^2 x_{14}^2} + (1 \leftrightarrow 2) + (1 \leftrightarrow 3) \\
        = & - 4 \pi i \left[ y_{14} \left( \frac{1}{x_{12}^3 x_{13}^2 x_{14}^2} + \frac{1}{x_{12}^2 x_{13}^3 x_{14}^2} + \frac{1}{x_{12}^2 x_{13}^2 x_{14}^3} \right) \right] + (1 \leftrightarrow 2) + (1 \leftrightarrow 3),
    \end{split}
\end{equation}
\begin{equation}
    \mathcal{I}_{3030} = \left( y_{14} \oint_{x_1} + y_{24} \oint_{x_2} + y_{34} \oint_{x_3} \right) \frac{dx}{(x-x_1)^3 (x-x_3)^3} = \p_{x_1}^2 \frac{ \pi i y_{13}}{x_{13}^3} = 12 \pi i \frac{y_{13}}{x_{13}^5},
\end{equation}
\begin{equation}
    \mathcal{I}_{2020} = \left( y_{14} \oint_{x_1} + y_{24} \oint_{x_2} + y_{34} \oint_{x_3} \right) \frac{dx}{(x-x_1)^2 (x-x_3)^2} = \p_{x_1} \frac{ 2 \pi i y_{13}}{x_{13}^2} = - 4 \pi i \frac{y_{13}}{x_{13}^3},
\end{equation}
\begin{equation}
    \mathcal{I}_{1212} = 2 \pi i \left[ \frac{y_{14}}{x_{12}^2 x_{13} x_{14}^2} + \frac{y_{34}}{x_{32}^2 x_{31} x_{34}^2} + y_{24} \frac{2 x_{12} x_{23} + x_{24} (x_{12} + x_{32})}{x_{12}^2 x_{23}^2 x_{24}^3} \right],
\end{equation}
\begin{equation}
    \mathcal{I}_{1111} = y_{14} \oint_{x_1} \frac{dx}{\prod^4_{i = 1} (x-x_i)} + (1 \leftrightarrow 2) + (1 \leftrightarrow 3) = \frac{2 \pi i y_{14}}{x_{12} x_{13} x_{14}} + (1 \leftrightarrow 2) + (1 \leftrightarrow 3).
\end{equation}

Then, in $JT_{\mu}$ deformation, one might meet $f = (y - y_i)^n \ (n \geq 0, \ 1 \leq i \leq n)$ case, namely
\begin{equation}
    \mathcal{I}^f_{a_1 \cdots a_n} = \frac{1}{(a_1 -1) !} \cdots \frac{1}{(a_n - 1) !} \p_{x_1} \cdots \p_{x_n} \mathcal{I}^f_{x_1 \cdots x_n}.
\end{equation}
The new quantity $\mathcal{I}^f_{x_1 \cdots x_n}$ is introduced for simplicity
\begin{equation}
    \mathcal{I}^f_{x_1 \cdots x_n} = - \sum_{j = 1}^n \int^{y_i}_{y_j} \td y (y - y_i)^n \oint_{x_i} \frac{\td x }{\prod_{k = 1}^n (x - x_k)}.
\end{equation}
This can easily be calculated by using the residue theorem as
\begin{equation}
    \mathcal{I}^f_{x_1 \cdots x_n} = - \frac{2 \pi i}{n+1} \frac{y_{ji}^{n+1}}{\prod_{k (\neq i)} x_{ik}}.
\end{equation}
Here we will not show some specific examples since the procedure is the same as that of $f = 1$ case. The integrals in $T \overline{T}$ case can be easily re-derived by setting $n = 0$.

\bibliography{ref}

\end{document}